\newtheorem{theorem}{Theorem}
\newtheorem{definition}{Definition}
\newtheorem{example}{Example}
\newenvironment{proof}[1][Proof]{\textbf{#1.} }{\ \rule{0.5em}{0.5em}}
\begin{document}

%% fabrizio \title{Compact DSOP Forms Derived from SOP Expressions}. Farei cosi' perche' le SOP sono cruciali per l'euristica che qui non deve troppo risaltare:
\title{Compact DSOP and partial DSOP Forms}

\author{Anna Bernasconi\thanks{Department of Informatics, Universit\`a di Pisa, Italy. {\tt \{annab, luccio, pagli\}@di.unipi.it}} \and Valentina Ciriani\thanks{Department of Information Technology, Universit\`a degli Studi di Milano, Italy. {\tt valentina.ciriani@unimi.it}}  \and Fabrizio Luccio\footnotemark[1]  \and Linda Pagli\footnotemark[1]}

\maketitle

%\IEEEcompsoctitleabstractindextext{%
\begin{abstract}

Given a Boolean function $f$ on $n$ variables, a {\em Disjoint Sum-of-Products (DSOP)} of $f$ is a set of products (ANDs) of subsets of literals whose sum (OR) equals $f$, such that no two products cover the same minterm of $f$. 
DSOP forms are a special instance of {\em partial DSOPs}, i.e. the general case where a subset of minterms must be covered exactly once and the other minterms (typically corresponding to don't care conditions of $f$) can be covered any number of times.
We discuss finding DSOPs and partial DSOP with a minimal number of products, a problem theoretically connected with various properties of Boolean functions and practically relevant in the synthesis of digital circuits. Finding an absolute minimum is hard, in fact we prove that the problem of absolute minimization of partial DSOPs is NP-hard. Therefore it is crucial to devise a polynomial time heuristic that compares favorably with the known minimization tools. To this end we develop a further piece of theory starting from the definition of the {\em weight} of a product $p$ as a functions of the number of fragments induced on other cubes by the selection of $p$, and show how product weights can be exploited for building a class of minimization heuristics for DSOP and partial DSOP synthesis.
A set of experiments conducted on major benchmark functions show that our method, with a family of variants, always generates better results than the ones of previous heuristics, including the method based on a BDD representation of $f$. 
% \PACS{PACS code1 \and PACS code2 \and more}
% \subclass{MSC code1 \and MSC code2 \and more}
\end{abstract}

\section{Introduction}
\label{sec-intro}

Given a Boolean function $f$ on $n$ variables $x_1,x_2,...,x_n$ in $\cal{B}$$^n$, a {\em Disjoint Sum-of-Products (DSOP)} of $f$ is a set of products (ANDs) of subsets of literals whose sum (OR) equals $f$, such that no two products cover the same minterm of $f$. As each product is the mathematical expression for a cube in $\cal{B}$$^n$, a DSOP also represents a set of non intersecting cubes occupying the points of $\cal{B}$$^n$ in which $f=1$. In fact we shall indifferently refer to products or cubes, and apply algebraic or set operations to them. 
We are interested in 
finding a DSOP with a minimal number of products.

Besides its theoretical interest, DSOP minimization is relevant in the area of digital circuits for determining various properties of Boolean functions and for the synthesis of asynchronous circuits, as discussed for example in~\cite{FC99,LF09,LF09b,Sa93,TDM01}. DSOPs are indeed used as a starting point for the synthesis of {\em Exclusive-Or-Sum-Of-Products (ESOP)} forms, and for calculating the spectra of Boolean functions.

%### AGGIUNTA
DSOP forms can be seen as a special case of {\em partial DSOPs} where a subset of minterms of a Boolean function must be covered exactly once, while other minterms can be covered more than once or not be covered at all. In particular this is the case where the points in the on set of a function are covered exactly once, while the points in the don't care set can be covered any number of times~\cite{LF09}. 
%### FINE AGGIUNTA

For speeding an otherwise exceedingly cumbersome process an absolute minimum in general is not sought for, rather heuristic strategies for cube selection have been proposed, working on explicit product expressions~\cite{BE10,FSC93,ST02}, or on a BDD representation of $f$~\cite{DHFD04,FD02}.  

%### CANCELLARE Here we study a class of heuristic algorithms for DSOP minimization based on the new concept of ``cube weight'', and show that our results compare favorably with the ones of the other known heuristics.
%### FINE CANCELLARE 

%### AGGIUNTA
After discussing  the complexity of DSOP and partial DSOP absolute minimization we propose a class of heuristic algorithms based on the new concept of ``cube weight'', and show that our results compare favorably with the ones of the other known heuristics.
%### FINE AGGIUNTA
The starting set of cubes is the one of a sum of product (SOP) found with standard heuristics. The SOP cubes may be eventually fragmented into non overlapping sub-cubes, giving rise to a largely unpredictable DSOP solution. The process may exhibit an exponential blow up in the number of fragments even dealing with theoretically minimal solutions, as for a function presented in~\cite{Sa95} where $|$SOP$|$ $=n/2$ and $|$DSOP$|$ $=2^{n/2}-1$ ($|$SOP$|$ and $|$DSOP$|$ denote the number of terms in the SOP and DSOP expression, respectively).

Another new characteristic of our heuristic is the idea of recomputing a SOP on the residual function at different possible stages of the disjoint minimization process, as a trade-off between quality of the result and computational time. We have observed experimentally that this strategy is crucial for obtaining compact DSOP forms.
%### AGGIUNTA
For ease of presentation we start with DSOP synthesis and then extend the heuristics to the more general case of partial DSOP.
%### FINE AGGIUNTA

%### RIFATTO Some applications may require to cover a subset of minterms of a Boolean function exactly once, while other minterms can be covered more than once. For example this is the case where the points in the on set of a function must be covered exactly once, while the points in the don't care set can be covered any number of times~\cite{LF09,F10}. We show  how our heuristic can be modified in order to efficiently compute such {\em partial DSOP forms}. 
%### FINE RIFATTO 

%%fabrizio: OK nella cover letter, nella sottomissione a TOCS esiste un campo Comments. TOCS richiede che se il lavoro viene da un convegno sia stato stato sostanzialmente rinforzato. Notare che in questa versione il nostro lavoro IWSPB non e' mai citato quindi l'ho tolto dalla biblio. 
%### An early version of this paper has been presented in: A. Bernasconi et al,  A New Heuristic for DSOP Minimization, Proc. 8th International Workshop on Boolean Problems (IWSBP) (2008). The present version has been strongly extended theoretically with a discussion on the construction of optimal DSOPs and an extension to the general case of partial DSOP synthesis, with the demonstration that this problem is NP-hard. The heuristic algorithms and the experimental results have been extended accordingly. 

The paper is organized as follows. 
%### AGGIUNTA
In the next Section~\ref{sec-NP} we discuss the complexity of absolute minimization of DSOP and partial DSOP forms proving that, for the latter, i.e. for the most general forms, the problem is NP-hard.
 %### FINE AGGIUNTA
In Section~\ref{sec-weight} we define the weight of a product $p$ as a function of the number of fragments possibly induced on other cubes by the selection of $p$. In Section~\ref{sec-alg} we show how this weight can be exploited for building a class of minimization heuristics. Section~\ref{sec-alg2} extends our strategy to partial DSOP synthesis.
In Section~\ref{sec-exp} we present and discuss the computational results obtained by applying the proposed heuristic to the standard {\sc espresso} benchmark suite~\cite{Y91}, and  comparing these results with other published data. 
The paper is concluded in Section~\ref{sec-fut}.

%### QUESTA SECTION E' NUOVA
\section{The complexity of DSOP minimization }
\label{sec-NP}

As it may be expected absolute DSOP minimization is a hard problem and absolute partial DSOP minimization may be at least as hard. 
Let us first recall some classical definitions.
In a Boolean space $\{ 0,1\} ^{n}$ described by $n$ variables $x_1$, $x_2$, $\ldots$, $x_n$, a {\em completely specified Boolean function} is a function $f: \{0,1\}^n \to \{0,1\}$, while Boolean a function $f$ is {\em partial} if $f: \{0,1\}^n \to \{0,1, -\}$. With usual terminology, a  {\em literal} $y_i$ is a variable $x_i$ in direct or complemented form, and {\em products} are ANDs of literals. A product $p$ is an {\em implicant} of the Boolean function $f$ if $\forall x\in \{ 0,1\} ^{n}, (p(x) =1) \Rightarrow (f(x)=1)$. An implicant $p$ of a function $f$ is a {\em prime implicant} if $p$ cannot be implied by a more general (i.e., with fewer literals) implicant of $f$.

Unlike SOPs, a DSOP composed of prime implicants only may not exist, as can be immediately seen considering a function with only three points in the on set, one adjacent to the other. Furthermore, DSOPs of prime implicants may exist but none of them may be minimal. For example the minimal DSOP cover of six implicants shown in Figure
%##~\ref{Fig:DSOP-nonP}  NON MI PRENDE LA LABEL
1 contains the non prime implicant $x_1x_4\overline x_5\overline x_6x_7$ displayed in the sub-map $x_5x_6x_7=001$, which is covered by the prime implicant $x_1x_4\overline x_5\overline x_6$ spanning across the sub-maps $x_5x_6x_7=000$ and $x_5x_6x_7=001$. The reader may discover that there is one DSOP cover composed of seven prime implicants but not less (actually we could not construct an example with less than seven variables).  

%fig 1 nuova 
\begin{figure*}\label{Fig:DSOP-nonP}
\begin{center}
\includegraphics[scale=0.28]{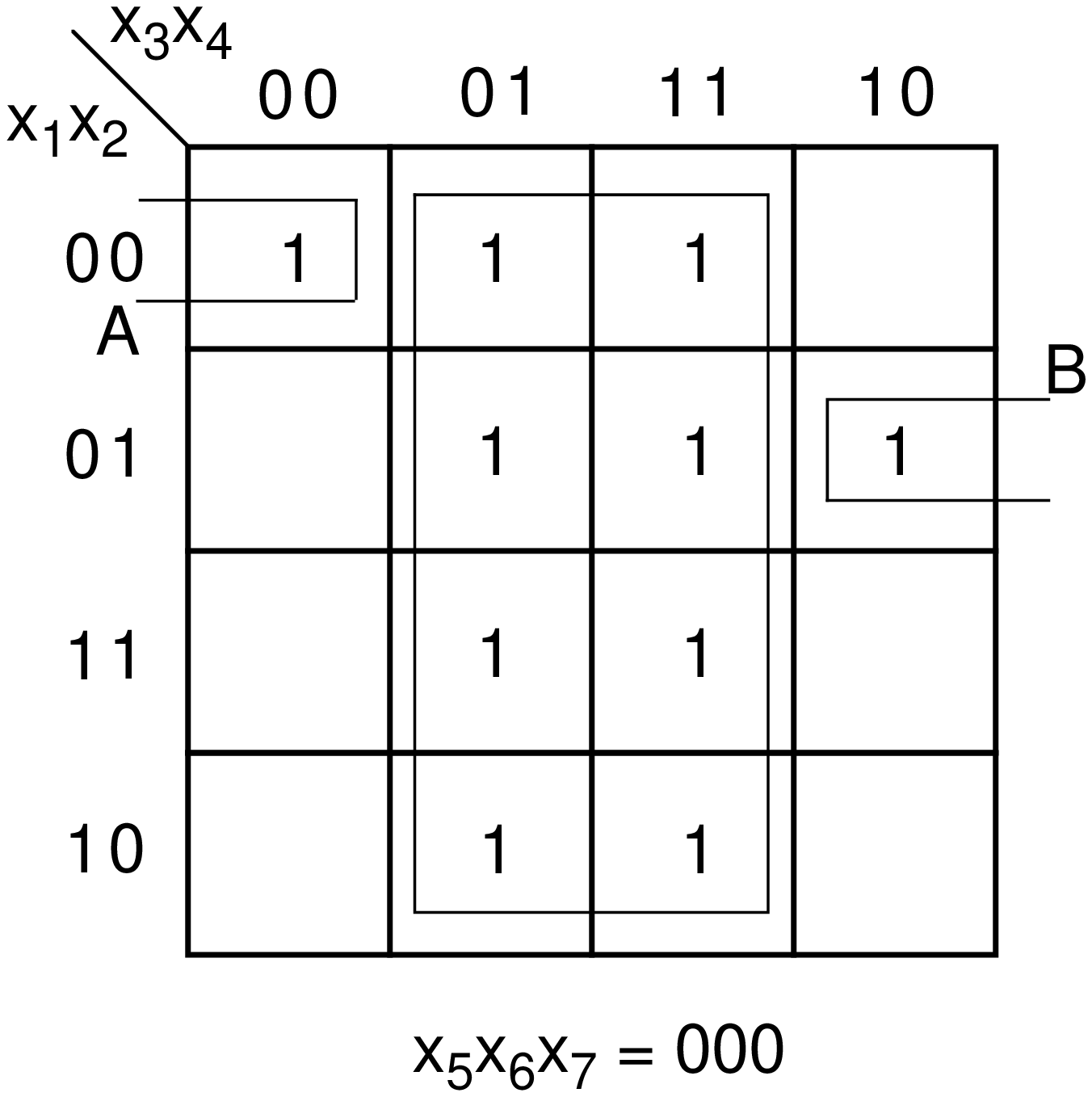}
\includegraphics[scale=0.28]{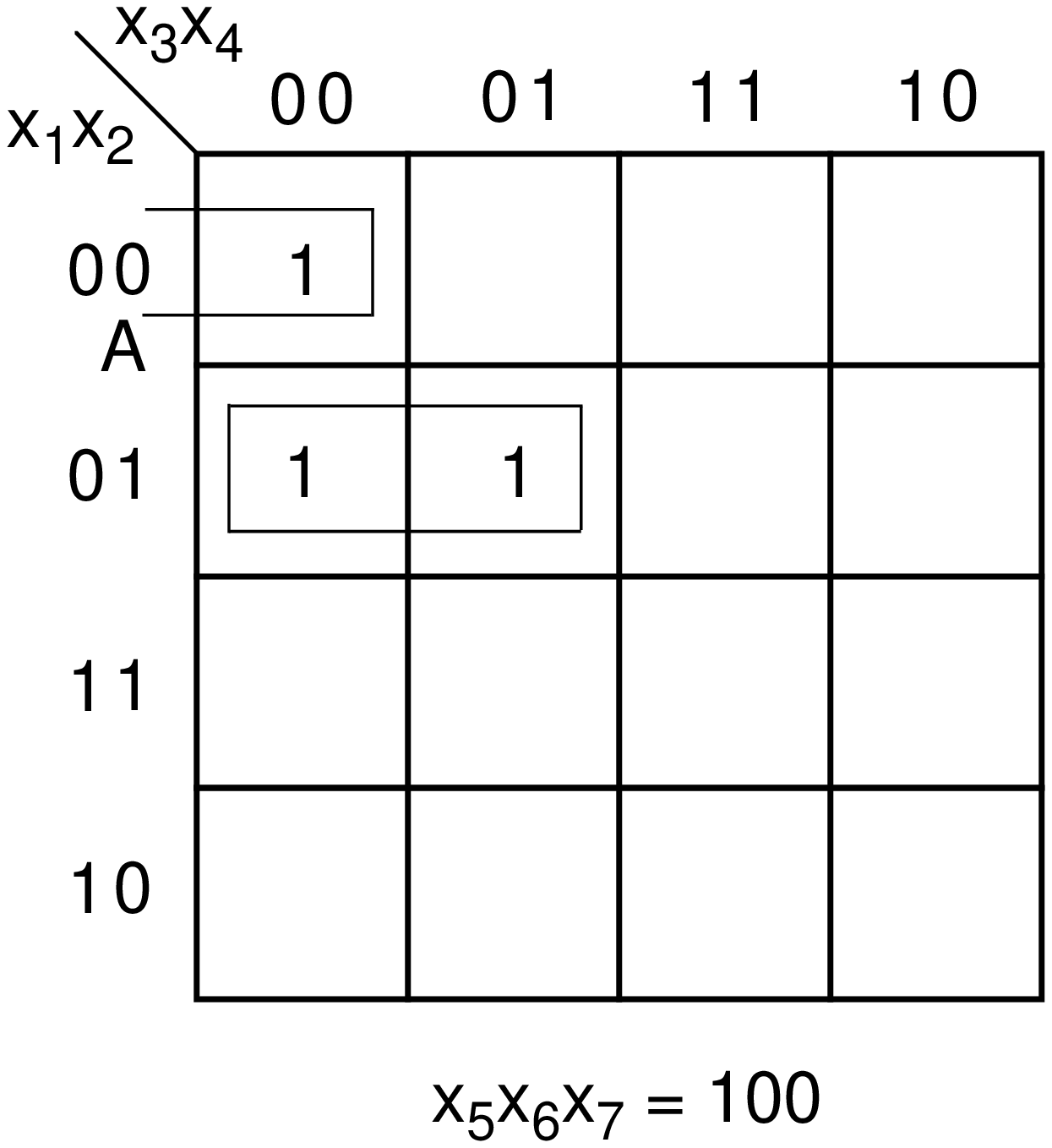}

\includegraphics[scale=0.28]{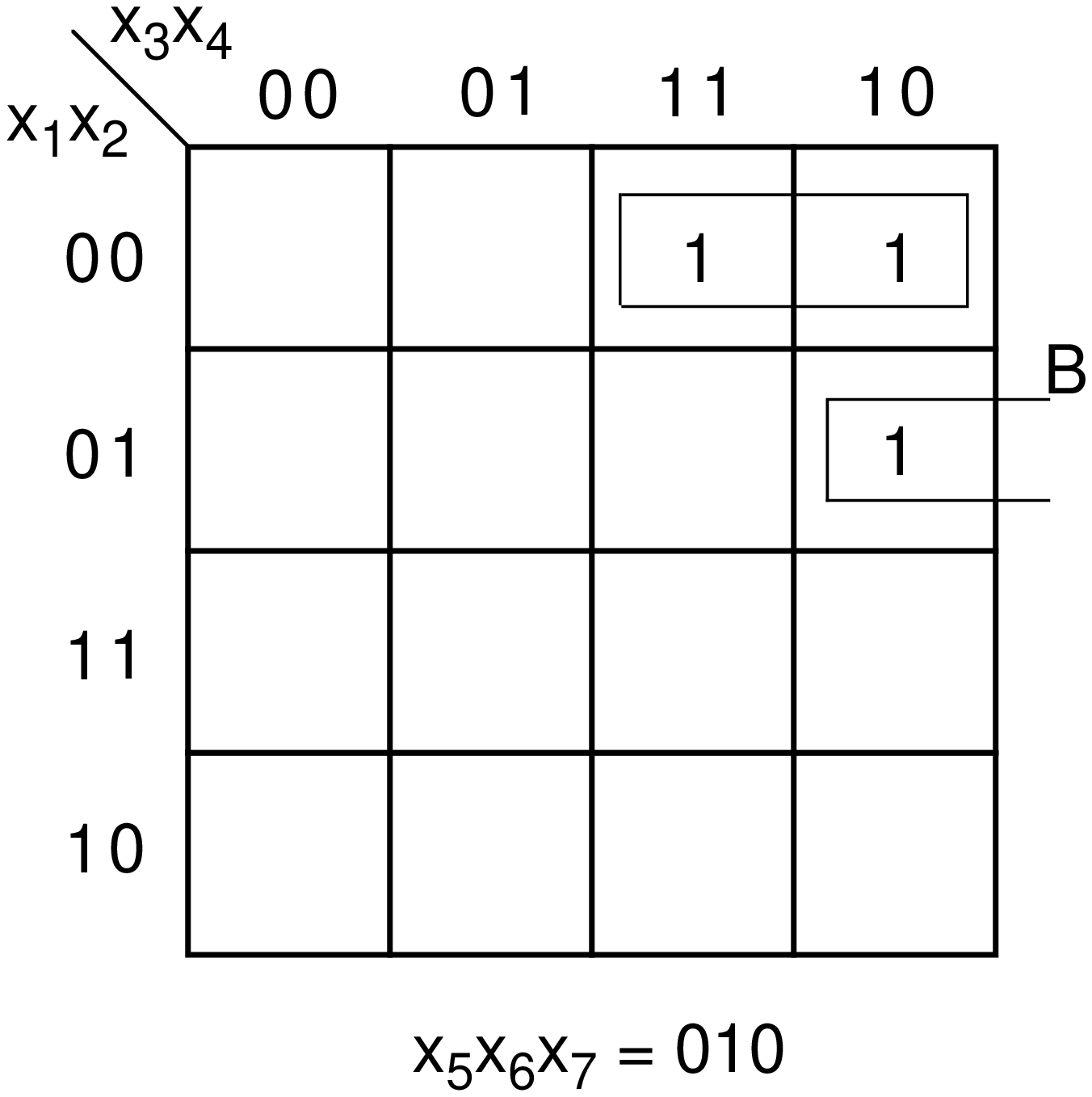}
\includegraphics[scale=0.28]{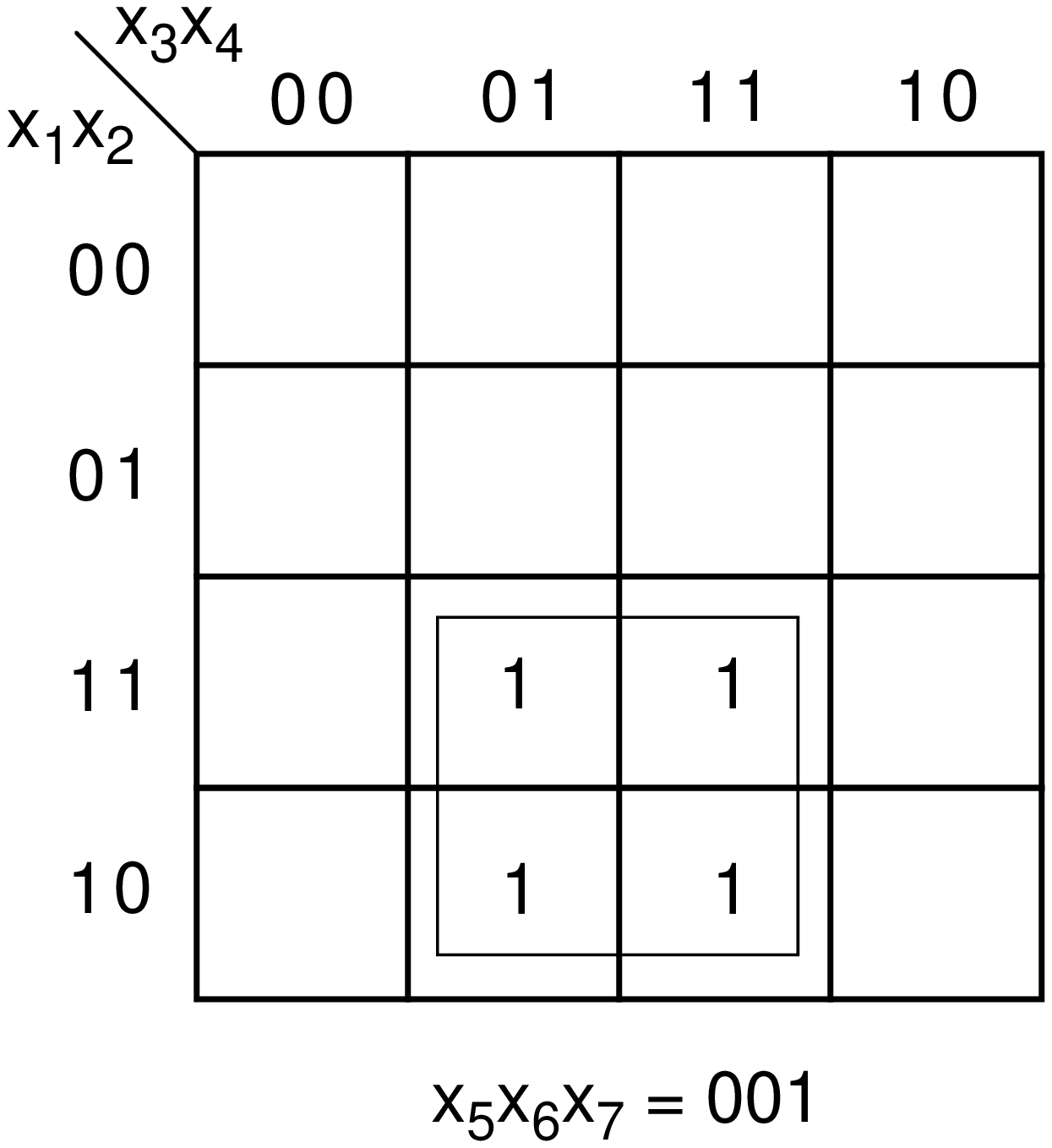}
\caption{A minimal DSOP in seven variables composed of six cubes, one of which is not prime (the Karnaugh maps for $x_5x_6x_7=011,101,110,111$ do not contain 1's and are not shown). A corresponding DSOP composed of prime cubes only includes at least seven of them.}
\end{center}
\end{figure*}
%fig 1 nuova

The above considerations show that, unlike in the SOP case, in DSOP minimization non prime implicants must also be considered. Theoretically this is not a major drawback as the generation of all implicants requires polynomial time in the size of the input (truth table of the function). The problem arises in the implicant selection phase where, as in the SOP case, a brute force enumerative selection requires exponential time in the worst case. It has been shown that SOP absolute minimization is as complex as set covering~\cite{GJ79,UVS06}. Similarly DSOP absolute minimization can be compared to the set partitioning (or minimal exact cover) problem.\footnote{The minimal exact cover problem is as follows: given a family of subsets $S$ of  a set $U$ and a positive integer $k$, is there a subset family $T\subseteq S$ such that the subsets in $T$ are $k$ in number, are disjoint, and their union is the entire set $U$? The minimal exact cover is NP-hard since it can be easily reduced by the ``exact cover problem'' introduced by Karp in 1972~\cite{K72} setting $k$ to the cardinality of $U$.} It is immediate that minimal exact cover is at least as hard as absolute DSOP minimization (solving the former problem efficiently would imply solving also the latter). Here we are not proving the reverse condition, rather we focalize on the most general problem of {\em partial} DSOP absolute minimization and prove that, in this version, the problem is NP-hard.
%% Anna: dimostrazione NP-completezza
More precisely, we prove that the {\em decision version} of partial DSOP minimization is NP-complete. 
Let us formally define the problem.
\begin{quote} 
{\sc MIN Partial DSOP}

{\sc Input: } A partial Boolean function $f: \{0,1\}^n \to \{0,1, -\}$, specified by its on, off, and don't care set, and a positive integer $k$. 

{\sc Question: } Is there a partial DSOP, i.e., a sum of products covering exactly once the points of the on set, and any number of times the  points in the don't care set of $f$,  with at most $k$ products?
\end{quote}
This problem is in NP because given a ÒcandidateÓ partial DSOP with at most $k$ terms, one can determine whether it is a covering of $f$ satisfying the given requirements in time polynomial in the size of the input instance. In fact this simply requires evaluating the partial DSOP at all of the points in the on set of $f$ and checking that one and only one of its products takes the value 1. 

To prove the NP-completeness of {\sc MIN Partial DSOP}, we adapt to our problem the theory and the proofs developed in~\cite{AHMPS08}, where the authors proved that the decision version of finding the smallest SOP form consistent with a truth table is NP-complete, reducing from {\sc 3-Partite Set Cover}, instead of {\sc Circuit SAT} as done in~\cite{UVS06}. Moreover, they pointed out that the reduction would also work for {\sc 3D Matching}, which is precisely the NP-complete problem that we will reduce to {\sc Min Partial DSOP}.
%Let $[n]$ denote the set $\{1, 2, \ldots, n\}$.
\begin{quote} 
{\sc 3D Matching}

{\sc Input: } A positive integer $n$, a partition $\Pi$ of the set $\{1, 2, \ldots, n\}$ into three sets of equal size, and a collection $\mathcal  S$ of subsets of $\{1, 2, \ldots, n\}$, where every subset contains exactly one element from each of the set of $\Pi$.

{\sc Question: } Is there a subcollection $\mathcal C \subseteq \mathcal S$ of size $n/3$ whose union is $\{1, 2, \ldots, n\}$?
\end{quote}
Note that such a subcollection $\mathcal C$ would provide an exact cover, as it covers each element of the set $\{1, 2, \ldots, n\}$ exactly once.
%The reduction from {\sc 3D Matching} to {\sc Min Partial DSOP} uses the following lemma from~\cite{AHMPS08}.  
%\begin{lemma}
%\label{lemma1}
%Let $\mathcal S$ be a set of subsets of $\{1, 2, \ldots, n\}$. Let $t > 0$ and let $V=\{v^{(i)} \ |\ i \in \{1, 2, \ldots, n\}\}$ and 
%$W=\{w^{(A)} \ |\ A \in \mathcal S\}$ be sets of vectors from $\{0,1\}^t$ satisfying
%\begin{equation}
%\forall A \in \mathcal S \mbox{ and } i \in \{1, 2, \ldots, n\}, \ \ i \in A \mbox{ if and only if } v^{(i)} \le w^{(A)}\,.
%\end{equation}
%Let $R = \{ x \in \{0,1\}^t \ |\ x \not\in V$ and for some $w \in W, \  \ x \le w\}$.
%Let $f$ be an incompletely specified function with domain $\{0,1\}^t$ such that $f(x) = 1$ if $x \in V$, $f(x) =-$ if $x \in R$, and $f(x) = 0$ otherwise. Then $\mathcal S$ has a cover $\mathcal C$ of size $m$ if and only if there is an SOP with $m$ products consistent with $f$.
%\end{lemma} 
%Observe that the proof of the original lemma provided in~\cite{AHMPS08} works also for this version, 
%as the SOP built starting from the cover $\mathcal C$  is indeed a partial DSOP.

In the next theorem, we give the reduction from {\sc 3D Matching} to {\sc Min Partial DSOP}. It is basically the same reduction given in~\cite{AHMPS08}  for SOP minimization, however here we show how it works even for disjoint SOP minimization.
Given $u, v \in \{0,1\}^n$, we will write $u \le v$ if $u_i \le v_i$ for all $i \in \{1, 2, \ldots, n\}$.
\begin{theorem}
{\sc Min partial DSOP} is NP-complete.
\end{theorem}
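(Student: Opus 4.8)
The plan is to prove NP-completeness of {\sc Min Partial DSOP} by a polynomial-time reduction from {\sc 3D Matching}, following the template of~\cite{AHMPS08} but taking care that the covering produced is disjoint on the on set. Membership in NP has already been argued in the excerpt, so the work is in the hardness direction. Given an instance of {\sc 3D Matching} with ground set $\{1,\ldots,n\}$, tripartition $\Pi$, and collection $\mathcal S$, I would build a partial Boolean function $f$ on roughly $n$ variables (one coordinate per ground element) together with a target $k = n/3$. The natural encoding, as in~\cite{AHMPS08}, is to let each subset $S \in \mathcal S$ correspond to a point $v^S \in \{0,1\}^n$ that is the characteristic vector of $S$ (so $v^S$ has exactly three $1$'s, one in each block of $\Pi$), and to place these $|\mathcal S|$ points in the on set of $f$. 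The off set and don't-care set are then chosen so that the only implicants of $f$ that are ``useful'' are exactly the cubes corresponding to the sets in $\mathcal S$: concretely, one arranges the off set so that a single product can cover several on-set points $v^{S_1},\ldots,v^{S_t}$ only when the $S_i$ are pairwise disjoint and the covered point set is precisely $\{v^{S_i}\}$, with all other points of that cube landing in the don't-care set. The $\le$ ordering introduced just before the theorem statement is the tool for describing these cubes: a cube is an interval $[a,b] = \{x : a \le x \le b\}$, and controlling which such intervals avoid the off set is what forces the correspondence with disjoint subfamilies.

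With this construction in place, the two directions of the equivalence ``$(f,k)$ is a yes-instance of {\sc Min Partial DSOP}'' $\iff$ ``the {\sc 3D Matching} instance has a perfect matching'' should go through as follows. For the forward direction of the matching-to-DSOP implication: given a perfect matching $\mathcal C \subseteq \mathcal S$ of size $n/3$, the cubes corresponding to the sets in $\mathcal C$ cover every on-set point (each $v^S$ is covered by its own cube), they cover each on-set point exactly once (because the sets in $\mathcal C$ are pairwise disjoint, the cube for $S$ contains $v^{S'}$ only if $S' \subseteq S$, hence only if $S' = S$ among members of $\mathcal C$ — here one must be slightly careful and design the off set so that $v^{S'} \in \text{cube}(S)$ forces $S' = S$ even when $S' \subsetneq S$ would otherwise be possible, which the three-ones-per-point structure plus off-set padding gives), and all remaining points of these cubes lie in the don't-care set, which may be covered arbitrarily. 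So we get a partial DSOP with $k = n/3$ products. For the converse: suppose there is a partial DSOP with at most $n/3$ products. Each product is an implicant of $f$, so it avoids the off set; the key structural lemma is that any implicant covering at least one on-set point must, by the way the off set was placed, cover a set of on-set points whose index sets are pairwise disjoint and whose union has size a multiple of $3$ — in fact each single cube that is ``tight'' can cover at most the points of one $S$, or more precisely the on-set points inside any implicant correspond to a disjoint subfamily. Since the on set has $|\mathcal S|$ points but only the $n/3$ of them that sit in a chosen matching need concern us, I would instead arrange that the on set is exactly $\{v^S : S \in \mathcal S\}$ and that $n/3$ cubes suffice to cover it disjointly only if those cubes realize a partition of $\{1,\ldots,n\}$ into $n/3$ disjoint triples each belonging to $\mathcal S$ — i.e., a perfect matching. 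Counting the covered $1$-coordinates (each of the $n/3$ chosen cubes contributes at most three ground elements, and together they must account for all $n$, forcing exactly three each and pairwise disjointness) is the clean way to close this direction.

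The main obstacle, and the place where the disjointness requirement genuinely changes things relative to the SOP argument of~\cite{AHMPS08}, is engineering the off set so that (i) the cubes corresponding to single sets $S \in \mathcal S$ are implicants of $f$, (ii) no implicant can ``cheat'' by covering on-set points coming from overlapping sets (this must be blocked by the off set rather than by the counting, since the counting argument only kicks in once we know the covered triples are disjoint), and (iii) the total construction is polynomial in $n$ and $|\mathcal S|$ and genuinely partial — i.e., the don't-care set must be large enough that the chosen cubes are allowed to spill into it, but the off set must be precisely placed to pin down the implicant structure. I expect this to require a careful case analysis on pairs of sets $S, S'$ with $|S \cap S'| \in \{1,2\}$, showing that the smallest cube containing both $v^S$ and $v^{S'}$ necessarily contains a designated off-set point; a convenient trick is to reserve, for each pair of ground elements $i,j$ lying in the same block of $\Pi$, an off-set point witnessing that no legal cube contains two elements of one block. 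Once the off set is correct, verifying that the instance is built in polynomial time and that the two implications hold is routine. Therefore the theorem follows: {\sc Min Partial DSOP} is in NP and is NP-hard, hence NP-complete. $\rule{0.5em}{0.5em}$
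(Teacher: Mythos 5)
Your proposal has the right high-level shape (reduction from {\sc 3D Matching}, target $k=n/3$, membership in NP already settled), but the encoding you choose breaks the equivalence. You place one on-set point $v^S$ for \emph{each subset} $S\in\mathcal S$, so a partial DSOP must cover all $|\mathcal S|$ of these points exactly once. At the same time you engineer the off set so that a single product can cover $v^{S_1},\ldots,v^{S_t}$ only when the $S_i$ are pairwise disjoint. These two requirements together make the forward direction false: take $n=6$ and $\mathcal S=\{\{1,3,5\},\{2,4,6\},\{1,4,5\},\{1,4,6\},\{1,3,6\}\}$. A perfect matching exists ($\{1,3,5\},\{2,4,6\}$), but the three sets containing the element $1$ pairwise intersect, so their on-set points need three distinct products, and no partial DSOP with $k=n/3=2$ products can exist. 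Your attempted escape (``only the $n/3$ of them that sit in a chosen matching need concern us'') is not available, since the definition of a partial DSOP forces \emph{every} on-set point to be covered. There is also an internal inconsistency in your converse direction: your structural lemma lets one cube cover a whole disjoint subfamily (hence up to $n$ ground elements), yet your counting argument assumes each cube contributes at most three.

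The paper avoids all of this by transposing the roles of elements and subsets. The on set has one point $v^{(i)}$ per \emph{ground element} $i$ (so exactly $n$ on-set points), where $v^{(i)}$ carries a balanced $q$-bit codeword $b(i)$ (with $q/2$ ones, $q=O(\log n)$) in the block $\Pi(i)$ and is zero elsewhere; each subset $A\in\mathcal S$ yields the vector $w^{(A)}=\bigvee_{i\in A}v^{(i)}$ and the product $\tau(w^{(A)})=\prod_{j:\,w^{(A)}_j=0}\overline x_j$, whose down-set meets the on set in exactly $\{v^{(i)}: i\in A\}$; everything below some $w^{(A)}$ but outside $V$ is a don't care. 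Then $n/3$ such products cover the $n$ on-set points iff the corresponding sets form an exact cover, disjointness on the on set is automatic because the AND of two such products forces at least $q/2+1$ zeros in every block while each $v^{(i)}$ has a block with only $q/2$ zeros, and the converse needs no disjointness at all: any $n/3$ products covering $V$ yield $n/3$ three-element sets covering $\{1,\ldots,n\}$, which are then disjoint by counting. You would need to rebuild your reduction along these lines; as written, the hardness direction does not go through.
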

\begin{proof}
We have already noticed that {\sc Min partial DSOP} belongs to NP. Thus, we are left to show that it is NP-hard. To this aim we show how to transform an input instance  of {\sc 3D Matching} into an instance of {\sc Min partial DSOP} in polynomial time. The instance defines an incompletely specified function $f$ depending on $O(\log n)$ variables, that can be covered by a partial DSOP with $n/3$ products if and only if 
there is a subcollection $\mathcal C \subseteq \mathcal S$ of size $n/3$ whose union is $\{1, 2, \ldots, n\}$.

Let $(n, \Pi, \mathcal S)$ be an input instance of {\sc 3D Matching}. We first define two sets of  vectors, $V$ and $W$,  that we will use to define an instance of {\sc Min partial DSOP}. 
Let $q$ be the smallest even integer such that ${q \choose q/2} \ge n$. Observe that $q = O(\log n)$.
We assign a unique q-bit vector $b(i)$ with exactly $q/2$ 1's to each $i \in \{1, 2, \ldots, n\}$. 
Let $\Pi(i) \in \{1,2,3\}$  be the index of the block of the partition $\Pi$ that contains $i$. Let $t = 3q$. The vectors in $V$ and $W$ can be divided into 3 blocks, each of size $q$.
We can now define the vectors in $V = \{v^{(i)}\ |\ 1 \le i \le n \}$ 
and $W = \{w^{(A)}\ | \ A \in \mathcal S \}$:
\begin{itemize} 
\item each $v^{(i)}\in V$, $i \in \{1, 2, \ldots, n\}$, is equal  to $b(i)$ on block $\Pi(i)$, and is 0 in the other two blocks;
\item each $w^{(A)} \in W$, $A \in \mathcal S$, is the bitwise OR of all $v^{(i)} \in V$ such that $i \in A$.
\end{itemize}
These two sets can be generated in time $n^{O(1)}$. 

Observe that this choice guarantees that:
\begin{equation} 
\forall\, A \in \mathcal S, \  \forall\, i \in \{1, 2, \ldots, n\}, \qquad i \in A \ \Longleftrightarrow \ v^{(i)} \le w^{(A)}\,. 
\end{equation}
The forward implication is obvious. To see that the backward implication holds, let $A \in \mathcal S$ and $i \in \{1, 2, \ldots, n\}$, and assume that $v^{(i)} \le w^{(A)}$. This implies that $A$ contains one element $j$ that belongs to the same block $\Pi(i)$ of $i$, where $v^{(i)}$ is not 0, i.e., $\Pi(i) = \Pi(j)$. Thus, since $v^{(i)} \le w^{(A)}$, we must have $b(i) \le b(j)$, which in turn implies $i = j$, and therefore $i \in A$.

We now construct an incompletely specified function $f$ on the domain $\{0,1\}^t$, as follows:
\begin{itemize}
\item $f(x) = 1$ if $x \in V$. 
\item $f(x) = -$ if $x \not\in V$ and $x \le w$ for some $w \in W$.
\item $f(x) = 0$, otherwise.
\end{itemize}
For $u \in \{0,1\}^t$, let $D(u) = \{ w \ |\ w \le u \}$ and let $\tau(u)$ denote  the product $\prod_{i: u_i = 0} \overline x_i$. Observe that $\tau(u)$ is the characteristic function of the set $D(u)$.
Consider the set $D(W) = \bigcup_{x \in W} D(x)$. Property (1) implies that $V \subseteq D(W)$ and that $f(x) = -$ iff $x \in D(W) \setminus V$.

To complete the proof we must show that $\mathcal S$ contains a cover $\mathcal C$  of size $n/3$ if and only if there is a partial DSOP for $f$, with $n/3$ products.
Suppose that $\mathcal S$ contains a cover $\mathcal C$  of size $n/3$, and consider the set of products $\{\tau(w^{(C)}) \ |\ C \in \mathcal C  \}$. It is immediate to verify that the sum of these products covers the function $f$. Indeed, for all $i \in \{1, 2, \ldots, n\}$, $i$ belongs to one of the sets in $\mathcal C$, say $C'$, and the vector $v^{(i)}$ in the on set of $f$ is then covered by the corresponding product  $\tau(w^{(C')})$ (recall that by construction $v^{(i)} \le w^{(C')}$, thus $v^{(i)} \in D(w^{(C')}))$. Now, we have to prove that these products define a partial DSOP for $f$, i.e., we must show that the corresponding cubes are either disjoint or intersect only on the don't cares of $f$. 

First of all recall that each vector $w^{(C)}$, $C \in \mathcal C$, can be divided into three blocks, each equal to one of the vectors $b(i)$. For instance, if $C = \{i, j, k\}$, with $\Pi(i) = 1$, $\Pi(j) =2$, and $\Pi(k) =3$, then $w^{(C)}$ is given by the concatenation of $b(i)$, $b(j)$, and $b(k)$. The related product $\tau(w^{(C)})$ can then be divided into three subterms of $q/2$ literals, containing the  complemented variables corresponding to the 0's in $b(i)$, $b(j)$, and $b(k)$. 
Moreover, since $C$ is a disjoint cover of $\{1, 2, \ldots, n\}$, each $v^{(i)}$ belongs to one and only one of the sets in $C$, that is only one of the vectors $w^{(C)}$ has a block equal to $b(i)$, for all $i \in \{1, 2, \ldots, n\}$. This implies that all subterms of the set of products $\{\tau(w^{(C)}) \ |\ C \in \mathcal C  \}$ are different.

Given any pair of products $\tau(w^{(C)})$ and $\tau(w^{(D)})$, with $C, D \in \mathcal C$, consider the intersection of the corresponding cubes. The characteristic function of the intersection is simply the product (AND) between $\tau(w^{(C)})$ and $\tau(w^{(D)})$. Since all subterms of $\tau(w^{(C)})$ and $\tau(w^{(D)})$ are different, the product $\tau(w^{(C)}) \cdot \tau(w^{(D)})$ contains three subterms, each of at least $q/2 + 1$ complemented variables. Thus, it  can cover only don't cares of $f$, since any vector $v^{(i)} \in V$  has one block with only $q/2$ 0's.

\medskip
Now, suppose that $\phi$ is a partial DSOP for $f$, with $n/3$ products. For each product $p \in \phi$, let $u(p)$ be the maximal vector satisfying $p$. Note that $f(u(p)) \in \{1, -\}$, thus $u(p) \in D(W)$ and there must be a set $S(p) \in \mathcal S$ such that $u(p) \le w^{(S(p))}$. 
We then show that the collection $\mathcal C = \{ S(p) \ |\ p \in \phi \}$ is a cover of $\{1, 2, \ldots, n\}$.  
Let $j \in \{1, 2, \ldots, n\}$. Since $f(v^{(j)}) = 1$, exactly one of the product in $\phi$, say $p^{(j)}$, must cover $v^{(j)}$. This implies $v^{(j)} \le u(p^{(j)})$. Thus $v^{(j)} \le w^{(S(p^{(j)}))}$, which by property (1) implies $j \in S(p^{(j)})$.
\end{proof}

\bigskip

%###  DIMOSTRAZIONE

The exponential nature of partial DSOP minimization justifies the search for heuristic solutions. This will be done after a theoretical discussion on how cubes gets fragmented due to their intersections, contained in the next section. This will lead to a heuristic strategy whose complexity is polynomial in the size of the output, i.e., in the number of products of the computed DSOP form.

%### FINE SECTION NUOVA

\section{The Weight of a Cube}
\label{sec-weight}

A product $q=y_{i_1}y_{i_2}...y_{i_k}$, $1 \leq k \leq n$, represents a cube of dimension $d(q)=n-k$, i.e., a cube of 
$2^{n-k}$ points in $\{0,1\}^n$.  The intersection $p=p_1 \cap p_2$ of two cubes $p_1=y_{i_1}...y_{i_{k_1}}$, $p_2=y_{j_1}...y_{i_{k_2}}$ is obviously obtained as the AND of the two corresponding products. The intersection  $p$ is empty if and only if there is a literal in $p_1$ that appears complemented in $p_2$, and vice-versa. Otherwise $p$ is a cube of dimension $d(p)=r$, with $r=n-(k_1 + k_2-c)$, and $c$ is the number of common literals in $p_1$ and $p_2$. 

Take $p_1$, $p_2$ as above, and let $p_1$, $p_2$ partially overlap. The set of points of $p_2\setminus p_1$ can be covered in different ways by a set of at least $k_1-c$ disjoint cubes of dimensions $r,r+1,...,n-k_2-1$. For $n=6$, letting $k_1=5$, $k_2=3$, $c=2$ we have $r=0$ and $d(p_1)=1$, $d(p_2)=3$, i.e., the intersection contains 1 point, and the two cubes contain 2 and 8 points, respectively. Therefore, $p_2\setminus p_1$ contains 7 points and can be covered with $5-2=3$ cubes of dimensions 0, 1, 2.
For an other example, consider cubes $A$ and $B$ in Figure~\ref{Fig:SOP}(a). The set $A \setminus B$ contains the minterms $0000$, $0001$, and $0100$. The disjoint covers for these points are  $\overline x_1 \overline x_2 \overline x_3 + \overline x_1 x_2 \overline x_3 \overline x_4$ and $\overline x_1 \overline x_3 \overline x_4 + \overline x_1 \overline x_2 \overline x_3  x_4$, both  containing two cubes.

 Now, if $p_1$ is selected into a DSOP, $p_2$ must be discarded and the points of $p_2\setminus p_1$ must be covered with at least $k_1-c$ disjoint cubes instead of one (the single $p_2$). Then $k_1-c-1$ is the number of extra cubes required by the DSOP. If the function $f$ can be represented by a SOP containing only $p_1$ and $p_2$, the selection of $p_1$ into a DSOP requires a total of $k_1-c+1$ cubes. In particular if $k_1-c=1$ the intersection $p$ covers exactly one half of the points of $p_2$ and $p_2\setminus p_1$ is also a cube. Clearly the general situation will not be that simple as the starting SOP for $f$, to be transformed into a minimal DSOP, will consist of a collection of cubes overlapping in groups. Still we define a weight for each cube $p_i$ equal to the minimum number of extra cubes that the selection of $p_i$ would induce in all the cubes intersecting $p_i$. 
Formally, let a SOP for $f$ consist of partially overlapping products $p_1,p_2,...,p_s$. We pose:

\begin{definition}
\label{def-weight}
Let a product $p_i$ of $k$ literals intersect the products $p_{i_1},...,p_{i_t}$, such that $p_i$ and $p_{i_j}$ have $c_j$ common literals. Then $w(p_i/p_{i_j})=k-c_j-1$ is the {\em weight of} $p_i$ {\em relative to} $p_{i_j}$, and $w(p_i)=\sum_{j=1}^{t}w(p_i/p_{i_j})$ is the {\em weight} of $p_i$. If $p_i$ does not intersect any other product, set $w(p_i)=-1$.
\end{definition}

Thus, when $p_i$ intersects $p_{i_j}$, the weight of $p_i$ relative to $p_{i_j}$ is the minimum number of additional products that we would have in the cover keeping $p_i$ and covering $p_i/p_{i_j}$ with non-overlapping products.

As an example, consider the function $f$ of four variables, represented in  Figure~\ref{Fig:SOP}(a). A minimal SOP of $f$ contains four cubes $A = \overline x_1\overline x_3$, $B= x_2x_4$, $C= \overline x_1 x_2$, $D= x_1 x_3$, all of dimension two. The weights are computed as follows.
For $A$: $w(A/B)=1$ (in fact, selecting $A$ in a DSOP would require to cover the remaining three points of $B$ with at least two disjoint cubes); $w(A/C)=0$ (the residual two points of $C$ can be covered with one cube); then $w(A)=1$. For $B$: $w(B/A)=1$;  $w(B/C)=0$;  $w(B/D)=1$;  then $w(B)=2$.  For $C$: $w(C/A)=0$;  $w(C/B)=0$;   then $w(C)=0$.  For $D$: $w(D/B)=1$;  then $w(D)=1$. As we shall explain in the next section, we start the construction of a DSOP by selecting the cubes with low weight and high dimension, breaking on the fly the ones that intersect a selected cube. In the present example, start by selecting $C$ and reduce $A$ and $B$ to two subcubes $A_1$, $B_1$ of two points each. Then select $D$ and further reduce $B_1$ to $B_2$ of one point. Then select $A_1$ and $B_2$, as shown in the DSOP of Figure~\ref{Fig:SOP}(b). During the process the weights are updated as explained below.
 
%fig 1
\begin{figure*}[!t]
\begin{center}
\includegraphics[scale=0.30]{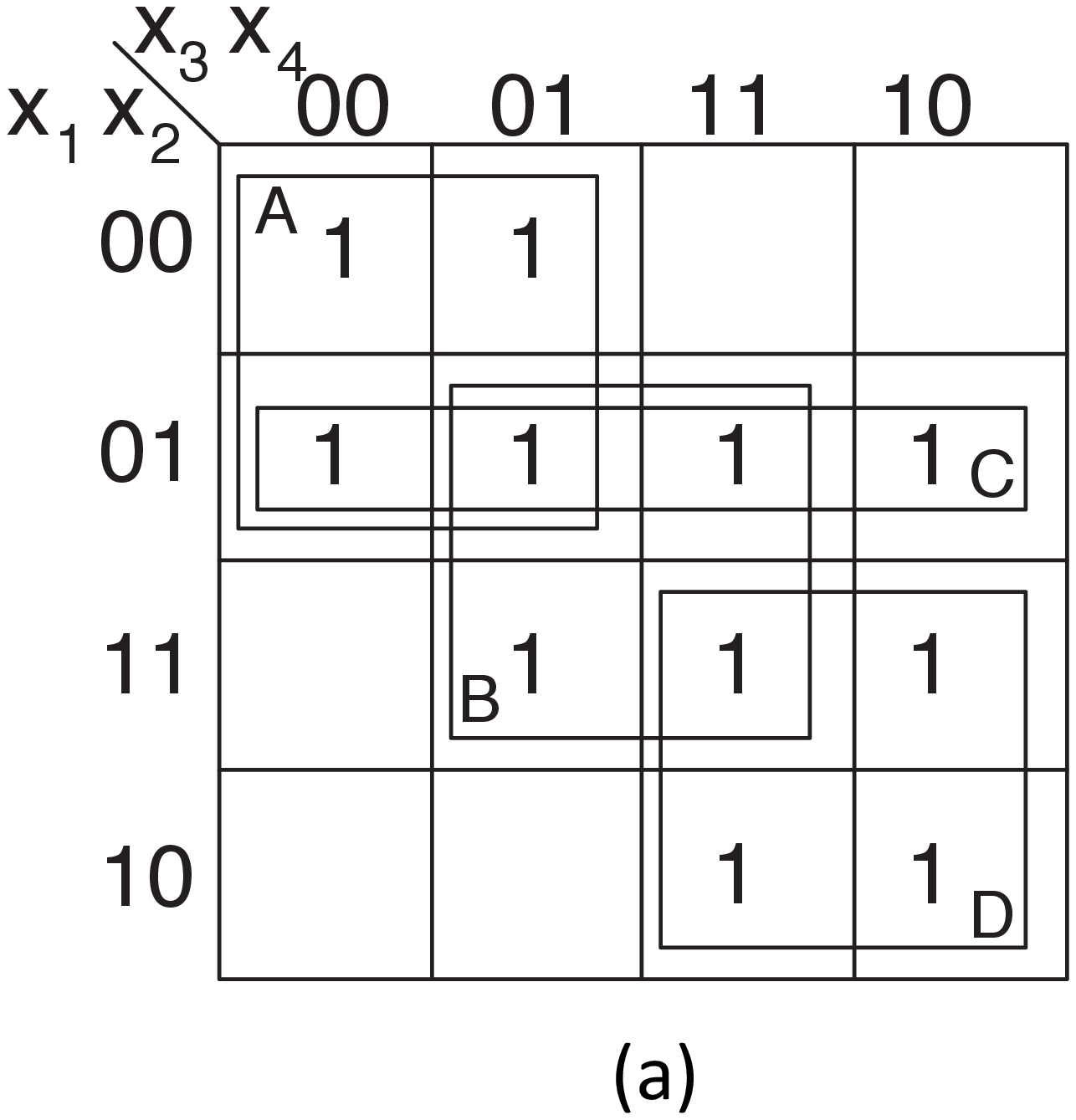} \ \ \ \ \ 
\includegraphics[scale=0.30]{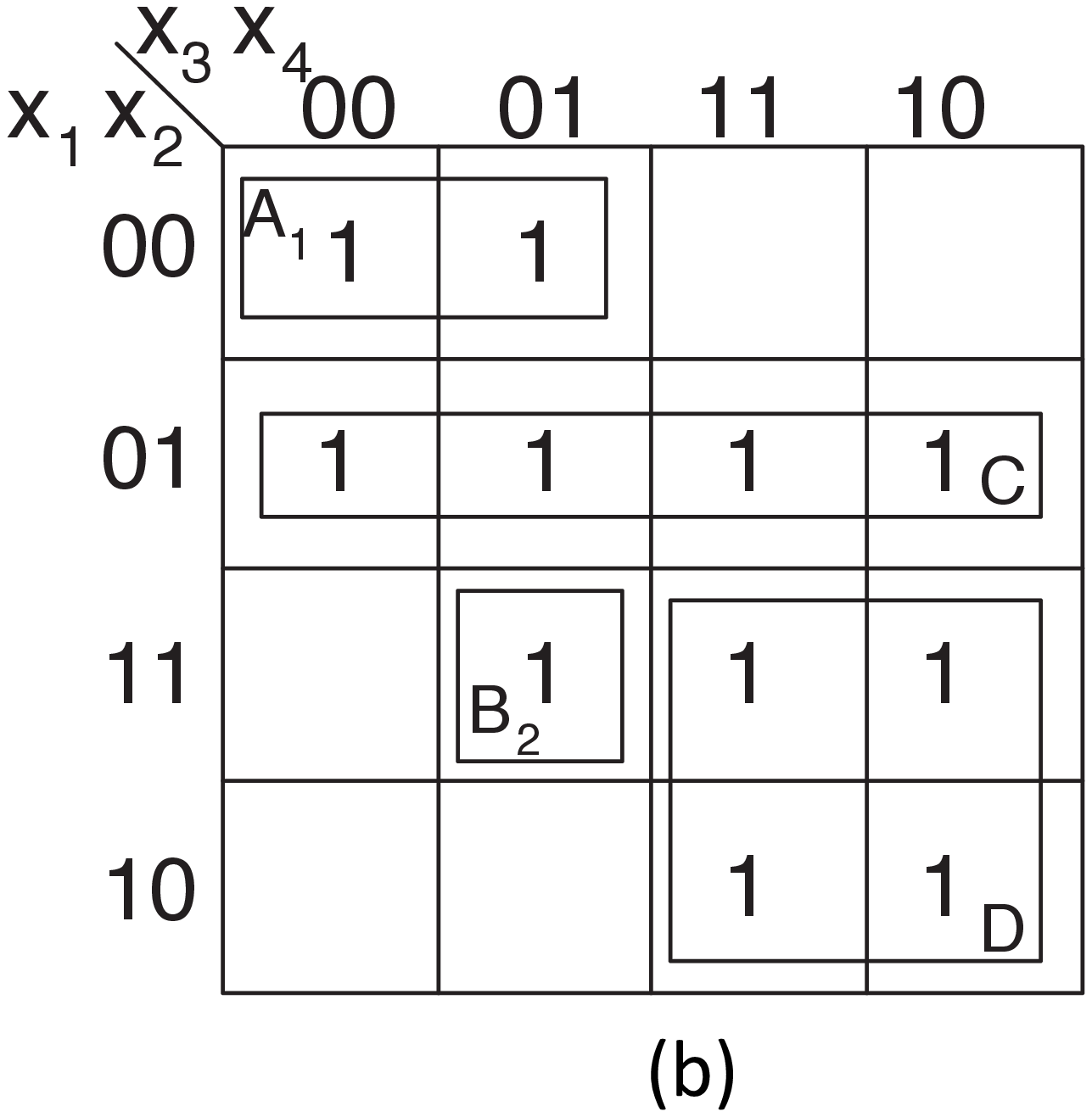}
\vspace{0.0cm}\caption{\label{Fig:SOP} (a) A minimal  SOP of four cubes of dimension 2 in $\cal{B}$$^4$,  with weights $w(A)=1$, $w(B)=2$, $w(C)=0$, $w(D)=1$. (b) A corresponding DSOP.}
\end{center}
\end{figure*}
%fig 1

\section{DSOP synthesis algorithms}
\label{sec-alg}
Let us consider an incompletely defined Boolean function $f: \{0,1\}^n \rightarrow \{0,1,-\}$ represented with a set of cubes $C=(C_{on},C_{dc})$, where $C_{on}$ covers the on set of $f$, i.e., the points $v$ in  $\{0,1\}^n$ such that $f(v)=1$, and $C_{dc}$ covers the don't care set of $f$, i.e., the points $v$ in  $\{0,1\}^n$ such that $f(v)=-$.

The new heuristic for DSOP construction uses four basic procedures working on an explicit representation of cubes. The first procedure BUILD-SOP($C,P$) works on a set $C$ of cubes covering an arbitrary function as above, to build a minimal (or quasi minimal) SOP $P$ for that function. Note that, during the process, BUILD-SOP may be called on different sets $C$ emerging in the computation. As a limit the cubes of $C$ may be minterms, i.e., cubes of dimension 0. 
The second procedure WEIGHT($P$) builds the weights for the cubes of a set $P$. 

The third procedure SORT($P$) sorts a set $P$ of weighted cubes. This procedure comes in two versions: i) the cubes are ordered for decreasing dimension and, if the dimension is the same, for increasing weight; ii) the cubes are ordered for increasing weight and, if the weight is the same, for decreasing dimension. If two or more cubes have same weight and same dimension, their order is chosen arbitrarily. The two versions of SORT give rise to two different alternatives of the overall algorithm.

The fourth procedure BREAK($q,p,Q$) works on the set difference $q\setminus p$ between two cubes, to build an arbitrary minimal set $Q$ of disjoint cubes covering $q\setminus p$. Note that this operation is easy since $q\setminus p$ can be  obtained as $q \setminus (p \cap q)$, where the latter is the set difference between two cubes, i.e., $q$ and $p \cap q$,  in turn a cube because is the intersection of two cubes. 

In practice, for BUILD-SOP one can use any minimization procedure (in our experiments we have used procedure {\sc espresso-non-exact} of the {\sc espresso} suite~\cite{Y91}). Procedures WEIGHT and SORT (both versions) are obvious. Procedure BREAK is the one suggested in~\cite{HCO74} and~\cite{Sa99}  as DISJOINT-SHARP. 

In the overall process we consider four sets of cubes $C,P,B,D$. At the beginning $C$ contains the cubes defining $f$, while $P,B,D$ are empty. During the process $C$ contains the cubes defining the part of $f$ still to be covered with a DSOP; $P$ contains the cubes of a SOP under processing; $B$ temporarily contains cubes produced by BREAK as fragmentation of cubes of $P$; and $D$ contains the cubes already assigned to the DSOP solution and, at the end, the solution itself. 

\begin{figure*}
\centering
\begin{minipage}{.5\textwidth}
\small
\begin{tabbing}
ddd\=ddd\=ddd\=ddd3=\kill

{\bf algorithm} DSOP($C,D$) \\ 
{\bf INPUT: } A set of cubes $C$ covering a function $f$\\
{\bf OUTPUT: }  A set of disjoint cubes $D$ covering $f$\\ \\
$D=\emptyset$ \\
{\bf while} ($C \neq \emptyset$)\\
\> BUILD-SOP($C, P$)\\
\> $A = \{d\in P \ |\  \forall c \in P\setminus\{d\}:\ d \cap c = \emptyset \}$\\ 
\> $D = D \cup A$\\
\> $P = P\setminus A$ \\ 
\> WEIGHT($P$) \\
\> SORT($P$) \\
\>  $B=\emptyset$ \\
\> {\bf while} ($P \neq \emptyset$) \\
\>\> {\bf let} $p$ be the first element of $P$ \\
\>\> $P = P\setminus \{p\}$ \\ 
\>\> $D = D \cup \{p\}$ \\ 
\>\> {\bf forall} $q\in P:\, p\cap q \neq \emptyset$ \\
\>\>\> $P = P\setminus \{q\}$  \\
\>\>\> BREAK($q, p,Q$)\\
\>\>\> OPT($q,Q,P,B$) \\
\>\> {\bf forall} $r\in B:\, p\cap r \neq \emptyset$  \\
\>\>\>  $B = B\setminus \{r\}$ \\
\>\>\>  BREAK($r, p,Q)$\\
\>\>\>  $B=B\cup Q$ \\
\> $C=B$ \\
%\vspace{-0.4cm}

\end{tabbing}
\end{minipage}
\caption{\label{Fig:alg} The general algorithm for DSOP synthesis.}
\end{figure*}

The algorithms of our family share the structure shown in Figure~\ref{Fig:alg} (its behaviour on incompletely specified functions is discussed at the end of this section).
As long as $f$ has not been completely covered with disjoint cubes, i.e., there are still cubes in the set $C$, a minimal (or quasi-minimal) SOP $P$ for the part of $f$ still to be covered is computed by the procedure BUILD-SOP. All cubes that do not intersect any other cube in $P$ are removed from $P$ and inserted in the DSOP $D$ under construction; the remaining cubes are weighted and sorted.
Then, the first cube $p$ is extracted from $P$ and inserted in the solution $D$. Each cube  $q \in P$ that intersects $p$ is removed from $P$, and a SOP $Q$ for the set difference $q \setminus p$ is computed by the procedure BREAK. 

During this phase an optional optimization procedure OPT is called to decide how to handle the fragments in $Q$; depending on this  optimization phase, different variants of the heuristic can be defined.
Note that, since the points of $p$ cannot be covered by any other cube, all fragments $r$ already inserted in $B$  must be tested for intersection with $p$ and, if necessary, replaced with the SOP computed by BREAK for the set difference $r \setminus p$.
When $P$ becomes empty, the fragments in $B$ are moved to the set $C$ and the algorithm iteratively builds a new SOP $P$ covering the points that are not yet covered by the DSOP $D$ under construction. The iterations terminate when $C$ becomes empty.

\bigskip

We have designed and tested five variants of our heuristic based on five different versions of the optimization procedure OPT, with different degrees of sophistication.  
The first variant, {\bf DSOP-1}, is the simplest, and computationally fastest, as OPT simply inserts the cubes of $Q$ into the set of fragments $B$:

\medskip
\begin{quote}
\begin{tabbing}
ddd\=ddd\=ddd\=ddd3=\kill
in DSOP-1: \\
\> {\bf procedure} OPT($q,Q,P,B$)\\
\> \> $B = B\cup Q$ \\
\end{tabbing}
\end{quote}

\begin{example}\label{ex:DSOP}
For an example, Figure~\ref{Fig:SOP}(b) shows a DSOP form for the SOP form of Figure~\ref{Fig:SOP}(a), computed by algorithm DSOP-1.
At the beginning $D = \emptyset$ and $P = \{\overline x_1 x_2,  x_1 x_3 , \overline x_1 \overline x_3, x_2  x_4 \}$, sorted for decreasing dimensions of cubes and then for increasing weights (we recall that,  $w(\overline x_1 x_2) = 0$, $w(x_1 x_3) = 1$, $w(\overline x_1 \overline x_3) = 1$, and $w(x_2 x_4) = 2$). The first cube considered is  $p=\overline x_1 x_2$, which is removed from $P$ and inserted in $D$. Its intersecting cubes, $\overline x_1 \overline x_3$ and $x_2 x_4$, are then broken generating the residuals cubes $\overline x_1 \overline x_2 \overline x_3$ and $x_1 x_2 x_4$, respectively, which are inserted in $B$, while $\overline x_1 \overline x_3$ and $x_2 x_4$ are removed from $P$. The last cube in $P$ to be considered is then $x_1 x_3$ that is inserted directly in $D$, since there are not any other remaining cubes in $P$. Its intersecting cube $x_1 x_2  x_4$ in $B$ is then reduced to $x_1 x_2 \overline x_3 x_4$. 
The second {\bf while} $(P\neq \emptyset)$ iteration starts with $P=\{\overline x_1 \overline x_2 \overline x_3, x_1 x_2 \overline x_3 x_4\}$ and $D = \{\overline x_1 x_2, x_1 x_3\}$, and terminates with the final DSOP $D = \{\overline x_1 x_2, x_1 x_3, \overline x_1 \overline x_2 \overline x_3, x_1 x_2 \overline x_3 x_4\}$.
\end{example}

\bigskip

In the second variant, {\bf DSOP-2}, after  a cube $p$ has been selected and moved to $D$, 
each cube $q$ intersecting $p$ is, as before, fragmented and moved to $B$. In addition the optimization procedure updates the weight of all cubes $r \in P$ that intersect $q$, and then sorts the cubes in $P$ again:

\medskip
\begin{quote}
\begin{tabbing}
ddd\=ddd\=ddd\=ddd3=\kill
in DSOP-2:\\
\> {\bf procedure} OPT($q,Q,P,B$)\\
\> \> $B = B\cup Q$ \\
\> \> $I = \{ r\in P \ | \ q \cap r \neq \emptyset  \}$ \\
\> \> WEIGHT($I$) \\
\> \> SORT($P$) \\
\end{tabbing}
\end{quote}

A disadvantage of both versions is that whenever a cube $p$ is moved from $P$ to $D$, all cubes $q$ intersecting $p$ are fragmented and removed from the set $P$. Hence, the fragments, even the big ones,  are ``out of the game'' and cannot participate in the construction of the DSOP $D$ until $P$ becomes empty and a new SOP covering all fragments in the set $B$ is computed. Consequentially, small cubes in $P$ could be selected first, possibly damaging the quality of the final result, i.e., the size of the final DSOP.

\bigskip

To partially avoid this disadvantage, we have implemented a third version of the heuristic, {\bf DSOP-3}, in which whenever a cube $p \in P$ is moved to $D$, each cube $q$ intersecting $p$ is, as before, fragmented and moved to $B$, and, in addition, all cubes $r \in P$ intersecting $q$ are moved to $B$ as well:

\medskip
\begin{quote}
\begin{tabbing}
ddd\=ddd\=ddd\=ddd3=\kill
in DSOP-3:\\
\> {\bf procedure} OPT($q,Q,P,B$)\\
\> \> $I = \{ r\in P \ | \ q \cap r \neq \emptyset  \}$ \\
\> \> $B = B\cup Q \cup I$ \\ 
\> \> $P = P \setminus I$ \\
\end{tabbing}
\end{quote}
In this way, the cubes of $P$ intersecting the fragments already in B cannot be selected,while is avoided the possible fragmentation of big cubes in $B$. Moreover, we leave open the possibility of selecting these big cubes in the next iterations of the algorithm, 
This version of the heuristic is computationally more expensive, since  in the internal while loop less cubes can be selected ($P$ empties faster), and procedure BUILD-SOP must be executed more frequently. 

\bigskip

The fourth version of the heuristic, {\bf DSOP-4},  checks whether the set $Q$ contains only one fragment, i.e., $q \setminus p$ is a cube. In this case, this only fragment is put back in $P$.  The cubes left in $P$ are then weighted and sorted again:
\medskip
\begin{quote}
\begin{tabbing}
ddd\=ddd\=ddd\=ddd3=\kill
in DSOP-4:\\
\> {\bf procedure} OPT($q,Q,P,B$)\\
\> \> {\bf if} ($|Q| = 1$) \\ 
\> \> \> $P = P \cup Q$ \\ 
\> \> {\bf else}  \\
\> \> \> $B = B\cup Q$ \\ 
\> \> WEIGHT($P$) \\
\> \> SORT($P$) \\
\end{tabbing}
\end{quote}

\bigskip

Finally, in the last version of the heuristic that we have tested, {\bf DSOP-5}, the biggest fragment in the set $Q$ is always put back in $P$. The cubes left in $P$ are then weighted and sorted again.
In this way,  big fragments remain part of the game in the present iteration of the algorithm:
\medskip
\begin{quote}
\begin{tabbing}
ddd\=ddd\=ddd\=ddd3=\kill
in DSOP-5:\\
\> {\bf procedure} OPT($q,Q,P,B$)\\
\> \> {\bf let} $b$ be the biggest cube in $Q$ \\
\> \> $P = P \cup \{b\}$ \\ 
\> \> $B = B\cup Q \setminus\{b\}$ \\ 
\> \> WEIGHT($P$) \\
\> \> SORT($P$) \\
\end{tabbing}
\end{quote}

The performances of these five procedures are discussed in Section~\ref{sec-exp}.
We have observed experimentally that  more sophisticated optimization procedures do not always provide better quality results. Experimental results have also outlined how the BUILD-SOP procedure, i.e., re-synthesizing the remaining cubes, seems to be crucial for obtaining compact DSOPs.

%%anna
Let us now briefly consider the case of the DSOP synthesis of incompletely specified Boolean functions. Our heuristic does not consider explicitly the presence of don't cares; indeed, the first call of the BUILD-SOP procedure produces a SOP $P$ covering the whole on set of $f$ and a subset of its don't care set. Then, the algorithm works on the SOP $P$, treating all points covered by its cubes as if they belonged to the on set of $f$, i.e., there is no distinction between points originally in the on set of $f$ and points originally in the don't care set. In particular, the successive calls of BUILD-SOP on the part of $f$ still to be covered with a DSOP, treat the function as if it  were completely specified.
Of course, each cube in the SOP $P$ computed by the first call of BUILD-SOP covers at least one point in the on set of $f$, as cubes covering only points in the don't care set  are discarded by the SOP minimization algorithm. However, the final disjoint cover $D$ for $f$ could contain cubes covering only points originally in the don't care set. In fact, cubes in $D$ are either entire cubes of  the starting SOP $P$,  or sub-cubes of cubes in $P$ (besides new cubes and sub-cubes originated by the successive calls of BUILD-SOP) and some sub-cubes (or new cubes) could only cover don't care points.

From the above all versions of our heuristic could be improved checking whether a cube $p$ contains only points in the don't care set of the function $f$, before adding it to the DSOP solution $D$ under construction. Unfortunately, such a check can be computationally expensive, and for this reason we have not added it as a ``default'' procedure in our algorithm.
In fact the check is left as an option. Experiments conducted on a set of incompletely specified functions show some improvements on the final form induced by the check at a considerable increase of computing time, see next Section~\ref{sec-exp}.
%%fine

%## SPOSTATO IN SECTION 2
%We conclude with an observation concerning the complexity of  the DSOP minimization problem. First of all, note that the complexity of the proposed heuristic is polynomial in the size of the output, i.e., in the number of products of the computed DSOP form.
%As the standard SOP minimization is as complex as set covering~\cite{GJ79,UVS06}, DSOP minimization can be compared to the set partitioning (or minimal exact cover) problem. The minimal exact cover problem can be described as follows: given a family of subset $S$ of  a set $U$ and a positive integer $k$, is there a subset family $T\subseteq S$ such that the subsets in $T$ are $k$ in number, are disjoint, and their union is the entire set $U$? The minimal exact cover is NP-hard since it can be easily reduced by the ``exact cover problem'' introduced by Karp in 1972~\cite{K72} setting $k$ to the cardinality of $U$.
%## FINE SPOSTATO

\section{Partial DSOP synthesis}
\label{sec-alg2}
%### RIFATTO QUI SOTTO 
%As mentioned in Section~\ref{sec-intro}, some applications require to cover some minterms of a Boolean function  exactly once while other minterms can be covered more than once. A typical example, is when  the points in the on set of a function must be covered exactly once, while the points in the don't care set can be covered more than once~\cite{LF09,F10}.
%In this section we present a general heuristic to efficiently compute  a {\em partial DSOP} cover.
%### RIFATTO 
As already mentioned the problem of DSOP minimization naturally generalizes to covering partial DSOPs where some minterms (e.g. the ones in the on set of the function) are covered exactly once while other minterms (e.g. the ones in the don't care set) can be covered any number of times~\cite{LF09}.
In this section we present a general heuristic to efficiently compute  a {\em partial DSOP} cover.

The heuristic  makes use of two sums of products as input.
The first SOP, $sopD$, contains all points of the on and don't care set of the function $f$ that must be covered only once (DSOP part), while the second SOP, $sopS$, contains all the points of $f$ that can be covered more than once  (SOP part). These two SOPs are disjoint.
The output of the heuristic is a cover of the overall function $f$, represented by the union of the two SOPs $sopD$ and $sopS$ that respects the  specifications. Note that when $sopD$ is empty the problem is a classical SOP minimization, while when $sopS$ is empty the problem is a classical DSOP minimization.

The algorithm uses four basic procedures as for the DSOP synthesis of Section~\ref{sec-alg}. In particular BUILD-SOP, WEIGHT, and SORT are the same.  

The fourth procedure PARTIAL-BREAK($q, p, sopD, sopS, Q, R$) works on the set difference $q\setminus p$ between two cubes, to build an arbitrary minimal set $Q$ of disjoint cubes covering $q\setminus p$, if $q\cap p$ is not entirely contained in $sopS$. If $q\cap p$ is contained in $sopS$, the cube $q$ is not broken and we can keep it in the set $P$ which contains the cubes to be considered in the current iteration. 
In this case we then set  $Q= \emptyset$.
Moreover, the procedure PARTIAL-BREAK builds a set $R$ containing points of $q\setminus p$ that can be covered more than once and can therefore be added as don't cares to $C$. In this way, these points, that have been already covered, could be used again in the minimization phase to get a smaller cover. This procedure, different from the one used for DSOP synthesis,  is presented in Figure~\ref{fig:break}.

\begin{figure*}
\centering
\begin{minipage}{.5\textwidth}
\small
\begin{tabbing}
ddd\=ddd\=ddd\=ddd3=\kill

{\bf algorithm} PARTIAL-BREAK($q, p, sopD, sopS, Q, R$)  \\
{\bf INPUT: } The chosen cube $p$, the cube $q$ that can be broken, \\the two SOPs $sopD$ and $sopS$ whose union represents $f$\\
{\bf OUTPUT: } A minimal set $Q$ of disjoint cubes covering $q\setminus p$ and\\ a set $R$ of the points of $q\setminus p$ that can be covered more than once\\ \\

$R = \emptyset$ \\
$p_i = q \cap p$ \\
{\bf if } ($p_i \subseteq sopS$) {\em // all points of $p_i$ can be covered more than once}\\
%\> $Q = \{q\}$ \\
\> $Q = \emptyset$ \\
{\bf  else if} ($p_i \subseteq sopD$) {\em // all points of $p_i$ must be covered once}\\
\> $Q =$ DISJOINT\_SHARP($q,p_i$) \\
{\bf  else} {\em // $p_i$ intersects both $sopD$ and $sopS$}\\
\> $Q =$ DISJOINT\_SHARP($q,p_i$)\\
\> $R = p_i \cap sopS$ \\
%\vspace{-0.4cm}
\end{tabbing}
\end{minipage}
\caption{\label{fig:break} The procedure PARTIAL-BREAK to be used in partial DSOP synthesis.}
\end{figure*}

\begin{figure*}
\centering
\begin{minipage}{.5\textwidth}
\small
\begin{tabbing}
ddd\=ddd\=ddd\=ddd3=\kill

{\bf algorithm} PARTIAL-DSOP($sopD, sopS, D$)  \\
{\bf INPUT: } Two disjoint SOPs describing the points of $f$ that \\must be covered only once ($sopD$) and the points of $f$ that \\can be covered more than once  ($sopS$)\\
{\bf OUTPUT: } A partial DSOP $D$ for the function $f$\\ \\ 

$C_{on} = sopD_{on} \cup sopS_{on}$ \\
$C_{dc} = sopD_{dc} \cup sopS_{dc}$ \\

{\bf while} ($C_{on} \neq \emptyset$) \\

\> BUILD-SOP($C, P$)\\
\> $A = \{d\in P \ |\  \forall c \in P\setminus\{d\}:\ d \cap c = \emptyset \}$\\ 
\> $D = D \cup A$\\
\> $P = P\setminus A$ \\ 
\> WEIGHT($P$) \\
\> SORT($P$) \\
\> $B=\emptyset$ \\
\> {\bf while} ($P \neq \emptyset$) \\
\>\> {\bf let} $p$ be the first element of $P$ \\
\>\> $P = P\setminus \{p\}$ \\ 
\>\> $D = D \cup \{p\}$ \\ 
\>\> {\bf forall} $q\in P:\, p\cap q \neq \emptyset$ \\
\>\>\> PARTIAL-BREAK($q, p, sopD, sopS, Q, R$)\\
\>\>\> {\bf if} ($Q \neq \emptyset$) $P = P\setminus \{q\}$\\
\>\>\> OPT($q,Q,P,B$) \\
\>\>\> $C_{dc} =C_{dc} \cup  R$\\
\>\> {\bf forall} $r\in B:\, p\cap r \neq \emptyset$  \\
\>\>\>  PARTIAL-BREAK($r, p, sopD, sopS, Q, R$)\\
\>\>\> {\bf if} ($Q \neq \emptyset$) $B = B\setminus \{r\}$\\
\>\>\>  $B=B\cup Q$ \\
\>\>\> $C_{dc} =C_{dc} \cup  R$\\
\> $C_{on}=B$ \\
\end{tabbing}
\end{minipage}
\caption{\label{fig-partialDSOP}  Algorithm for partial DSOP synthesis.}
\end{figure*}

\begin{figure*}[t]
\begin{center}
\includegraphics[scale=0.3]{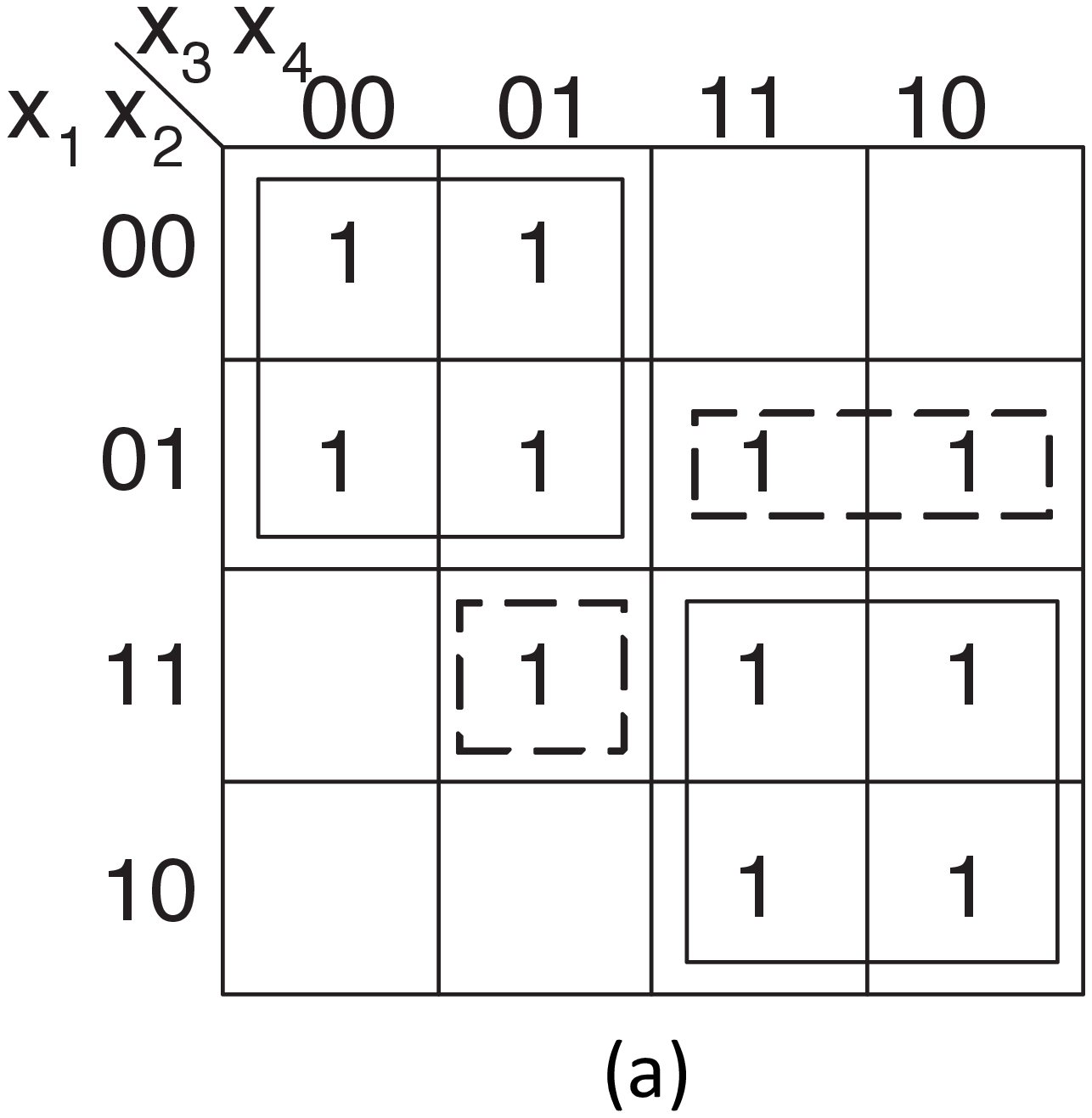}
\hspace{0.8cm}
\includegraphics[scale=0.3]{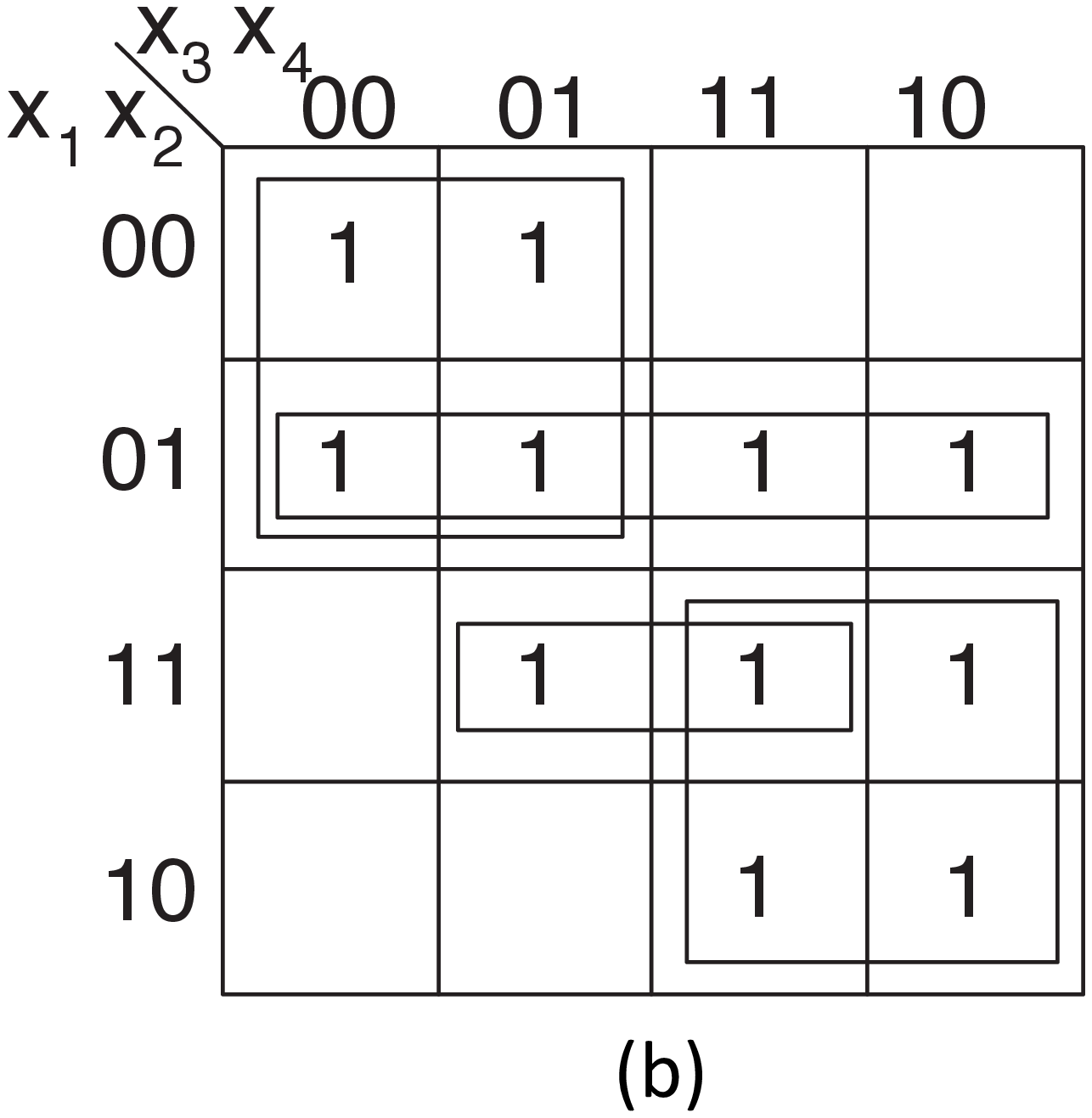}
\vspace{0.0cm}\caption{\label{Fig:SOPDS} (a) sopS (cubes with solid lines) and sopD (cubes with dotted lines). (b) A corresponding partial DSOP.}
\end{center}
\end{figure*}

The overall minimization heuristic is presented in Figure~\ref{fig-partialDSOP}.
As for the DSOP synthesis, the heuristic makes use of four sets of cubes $C,P,B,D$.
At the beginning $C = sopD \cup sopS$ contains the cubes defining $f$ while $P,B,D$ are empty. During the processing $C$ contains the cubes defining the part of $f$ still to be covered with a partial DSOP. $P$ contains the cubes of a SOP under processing. $B$ temporarily contains cubes produced by BREAK as fragmentation of cubes of $P$. $D$ contains the cubes already assigned to the partial DSOP solution and, at the end, the solution itself. 
OPT($q,Q,P,B$) is an optional optimization procedure to decide how to handle the fragments produced by the procedure BREAK. As before, depending on this  optimization phase, different variants of the heuristic can be defined.

\begin{example}\label{ex:PDSOP}
Consider the function shown in Figure~\ref{Fig:SOPDS}(a). Suppose that $sopD = \{\overline x_1 x_2 x_3, x_1 x_2 \overline x_3 x_4\}$ (cubes with dotted lines in the figure) and $sopS = \{\overline x_1 \overline x_3, x_1 x_3\}$ (cubes with solid lines). A partial DSOP for $f$ is shown in Figure~\ref{Fig:SOPDS}(b). This expression is obtained with the partial DSOP algorithm as described in the following.
Let OPT($q,Q,P,B$) be the simple command $B = B\cup Q$ (as in the {\bf DSOP-1} procedure). At the beginning $D = \emptyset$ and, after the SOP minimization phase, $P = \{\overline x_1 x_2,  x_1 x_3 , \overline x_1 \overline x_3, x_2  x_4 \}$, sorted  for decreasing dimensions of cubes and then for increasing weights (note that we have the same initial $P$ of Example~\ref{ex:DSOP}). The first cube  $p=\overline x_1 x_2$  is removed from $P$ and inserted in $D$. 
Its intersecting cubes are $\overline x_1 \overline x_3$ and $x_2 x_4$. In the procedure PARTIAL-BREAK, the intersection between $x_2 x_4$ and $\overline x_1 x_2$ is $p_i = \overline x_1 x_2 x_4$. Note that  $p_i$ intersects both $sopD$ and $sopS$, thus $Q=\{x_1 x_2 x_4\}$ and $R=\{\overline x_1 x_2 \overline x_3 x_4\}$ (i.e., $x_1 x_2 x_4$ and $\overline x_1 x_2 \overline x_3 x_4$ will be inserted in $B$ and in the don't care set of $C$, respectively). Moreover, we compute the intersection $p_i$ between $\overline x_1 \overline x_3$ and $\overline x_1 x_2$, obtaining $\overline x_1 x_2\overline x_3$ which is entirely contained in $sopS$. Thus, in this case $Q=R=\emptyset$, then $\overline x_1 \overline x_3$ is not broken and  it is not removed from $P$. 
Similar operations are performed on $\overline x_1 \overline x_3$, and on $x_1 x_2 x_4$ contained in $B$.
The second {\bf while} $(P\neq \emptyset)$ iteration, which starts with the $P=\{x_1 x_2 x_4\}$ and $D = \{\overline x_1 x_2, x_1 x_3,\overline x_1 \overline x_3\}$, terminates with the partial DSOP shown in Figure~\ref{Fig:SOPDS}(b).
\end{example}

\section{Experimental Results}\label{sec-exp}
In this section we present and discuss the results obtained with the heuristics presented above  to  the standard {\sc espresso} benchmark suite~\cite{Y91}.
All experiments were performed on a 1.8 GHz  PowerPC with 1 GB of RAM.

\begin{table*}[ht]
  \centering
  \begin{scriptsize}
    \begin{tabular}
{||l||r|r||r||r|r||r|r||r|r||r|r||r|r||}
\hline
\multicolumn{14}{||c||}{\bf SORT VERSION: dimension/weight}\\
\hline
 &   &  &{\bf SOP} & \multicolumn{ 2}{c||}{{\bf DSOP-1}} & \multicolumn{ 2}{c||}{{\bf DSOP-2}} & \multicolumn{ 2}{c||}{{\bf DSOP-3}} & \multicolumn{ 2}{c||}{{\bf DSOP-4}} & \multicolumn{ 2}{c||}{{\bf DSOP-5}} \\
 {\bf Bench} &  {\bf in} &{\bf out} &    {\bf size} &  {\bf size}&  {\bf time}&   {\bf size}&   {\bf time}&  {\bf size}& {\bf  time} &{\bf size}& {\bf time} &{\bf size}& {\bf  time} \\\hline

accpla  & 50&69 & 175 & 1457 &11.68 &1458 &13.42& 1190 &10.55&{\bf 1125}&5.61&1528& 21.60\\\hline    
addm4   & 9&8 & 200 &  218 &0.26&221 &0.31&{\bf  214}&0.40&222 &0.19&224&0.19\\\hline
alu4 & 14& 8& 575& 923 &1.51 &921	 &2.00 &	{\bf 881}	&2.32 &1051 &3.36 &	1044 &2.87 \\\hline
apex3& 54&50 & 280& {\bf 345}	&0.62 &	{\bf 345}	&0.65 &	350	&0.68 &366	&2.04 &	400	&2.11 \\\hline
apex4& 9&19 & 436& 506	&0.34 &		506	&0.36 &	503	&0.31 &	502	&0.52 &		{\bf 501}	&0.59 \\\hline
%b10     & 15&11 & 100 &  130 &0.31  & 130 &0.33 &{\bf 115} &0.49 &116	 &0.13 &	118	&0.16 \\\hline
b2      & 16&17 & 106 &  131 &0.42 &  131 &0.49 & \bf 121 &0.54  &130	&0.93 &	127	&0.56 \\\hline
%b3      & 32&20 & 211 &  308 &0.78 &  308 &0.88 &  {\bf 279} &1.21 &   297	 &0.55 &		299	&0.52 \\\hline
bc0     & 26&11 & 179 &  214 &0.47 &  214 &0.53 &  {\bf  202} &0.68 &  212 &0.53 &		208	&0.51 \\\hline
%bca&  26& 46& 180&191 &0.27 &	191	&0.28 &	189	&0.29 &	189	&0.25 &{\bf 187} &0.27  \\\hline
%bcb& 26& 39& 155&\bf 162	&0.25 &	\bf	162	&0.26 &	\bf	162	&0.26 &	\bf	162	&0.23 &	163	&0.25 \\\hline
%bench1 & 9& 9& 139& 248	&0.23 &		248	&0.25 &250	&0.32 &\bf	247	&0.21 &	253	&0.17 \\\hline
chkn& 29& 7& 140&187	&0.87 &		187	&0.88 &	{\bf 168}	&0.88 &		215	&0.48 &		221	&0.42 \\\hline
clip& 9&5 & 120& 151	&0.28 &	150	&0.29 &	{\bf 140} &0.39 &	153	&0.20   &157	&0.16 \\\hline
cps&  24&109 &163 &{\bf 184}	&0.75 &		{\bf 184}	&0.76 &		204	&0.89 &	219	&0.38 &		225	&0.38 \\\hline
dist    &  8&5 & 123 &  135 &0.22 &   135 &0.23 &  130 &0.38 & {\bf 128}	 &0.15 &	129	&0.16 \\\hline
%ex1010  & 10&10 & 284 &  828 &0.58 &  {\bf  820} &0.75 &  876 &1.34 &  840	 &1.18 &		900	&0.84 \\\hline
ex5     &  8&63 &  74 &  126 &0.79 &  126 &0.44 &{\bf  122} &0.80 &  137 &0.39 &	141	&0.48 	\\\hline
%exam     & 10&10 & 67 &  {\bf 142}	&0.22 &	{\bf 142}	&0.25 &	145	&0.33 &		{\bf 142}	&0.17 &		160	&0.19 \\\hline
%exep& 30&63 & 110& {\bf 130}	&0.38 &		{\bf 130} &0.39 &{\bf 130} &0.53 &		134	&0.30  &	133	&0.26 \\\hline
%exps& 8&38 &136 & {\bf 149}	&0.32 &	{\bf 149}	&0.32 &	151	&0.31 &		150	&0.26 &		150	&0.19 \\\hline
gary    & 15&11 & 107 &  134 &0.52 &    134 &0.35 &{\bf  124} &0.49 &   126 &0.28 &	127	&0.16 	\\\hline
ibm     & 48&17 & 173 &  366 &1.23 &  366 &0.59 &{\bf  361} &0.99 & 373	 &0.46  &		391	&0.30 \\\hline
in4     & 32&20 & 212 &  312 &1.36 &  312 &0.84 &{\bf  280} &1.33 &  303	&0.54 &	304	&0.56 \\\hline
intb    &  15&7 & 631 &  811 &2.03 &    818 &1.61 &{\bf  798} &2.57 &922 &2.56 &	952	 &2.91 \\\hline
jbp     & 36&57 & 122 &  135 &0.57 &  135 &0.26 &{\bf  127} &0.43 &134	 &0.20 &	136	&0.17 \\\hline
mainpla & 27&54 & 172 &  296 &4.67 &   296 &3.00 &  293 &3.23 & 288	 &5.37 &	{\bf 260}	&5.20 \\\hline
max1024&10 &6 &274 &\bf 332	&0.32 &	334	&0.33 &	334	&0.54 &	347	&0.35 &	345	&0.32 \\\hline
misex3  & 14&14 & 690 & 1070 &2.72 & 1073 &1.49 &{\bf 1032} &2.68 &1159	&2.57 &	1309 &3.48 \\\hline
%pdc& 16& 40&145 &{\bf 378	} &0.55 &	380	&0.72 &381	&0.98 &	392	&0.47 &	446	&0.33 \\\hline
soar    & 83&94 & 353 &  447 &1.93 &    447 &1.30 &  \bf 434 &1.58  &442	&0.59  &	456	&0.58 \\\hline
%spla    & 16&46 & 260 &  359 &0.83 &   359 &0.44 &{\bf  347} &0.64 & 376	 &0.37 &		375	&0.28 \\\hline
sym10& 10&1 &210 &232	&0.43 &{\bf 231} &0.51 &		232	&1.11 &		235	&1.01 &		245	&0.85 \\\hline
table3  & 14&14 & 175 &  181 &0.41 &    181 &0.23 & 180 &0.33 & {\bf 179}	 &0.16 &		{\bf 179}	&0.18 \\\hline
table5  & 17&15 & 158 &  167 &0.39 &   167 &0.36 & {\bf  161} &0.38 &{\bf 161}	&0.24 &		{\bf 161}	&0.25 \\\hline
%test2& 11& 35& 1103&	2278	 &1.98 &	\bf 2275	 &2.17 &	2322	 &2.40  &	2307	 &4.91 &	2438	 &4.54 \\\hline
%test3   & 10&35 & 541 & 1368 &1.69 & {\bf 1363} &1.36 & 1462 &1.81 &  1414	&1.43 &		1531	 &1.02 \\\hline
tial    & 14&8 & 581 &  943 &1.78 & 937 &1.96 &{\bf  874} &2.98 &  1071	&2.84 &		1040	 &2.50 \\\hline
vtx1	 & 27& 6&110 &{\bf 204}	&0.45 &	{\bf 204}	&0.49 &	{\bf 204}	&0.64 &	208	&0.34 &		213	&0.31 \\\hline
x7dn	 &66 &15 & 538&796	&1.30 &		{\bf 784}	&1.43 &		812	&1.57 &		813	&0.88 &	864	&0.76  \\\hline
\end{tabular}  
 \end{scriptsize}  
  \caption{Comparison of five different variants of the DSOP minimization heuristic (SORT version: {\bf dimension/weight}.) The size of the best DSOP representation computed for each benchmark is in boldface.}
  \label{tab1dw}
  \end{table*}

\begin{table*}[ht]
 \centering
  \begin{scriptsize}
    \begin{tabular}
{||l||r|r||r||r|r||r|r||r|r||r|r||r|r||}
\hline
\multicolumn{14}{||c||}{\bf SORT VERSION: weight/dimension}\\
\hline
 &   &  &{\bf SOP} & \multicolumn{ 2}{c||}{{\bf DSOP-1}} & \multicolumn{ 2}{c||}{{\bf DSOP-2}} & \multicolumn{ 2}{c||}{{\bf DSOP-3}} & \multicolumn{ 2}{c||}{{\bf DSOP-4}} & \multicolumn{ 2}{c||}{{\bf DSOP-5}} \\
{\bf Bench} &  {\bf in} &{\bf out} &    {\bf size} &  {\bf size}&  {\bf time}&   {\bf size}&   {\bf time}&  {\bf size}& {\bf  time} &{\bf size}& {\bf time} &{\bf size}& {\bf  time} \\\hline
accpla  & 50&69 & 175 & 1779  &24.57 &1717  &32.46 &\bf 1317  &16.80  &1535  &34.29  &3078  &97.73 \\\hline    
addm4   & 9&8 & 200 &   220  &0.26 &   222   &0.27 & \bf 217  &0.29 & 222  &0.20 &223  &0.13 \\\hline
alu4 & 14& 8& 575& 1138  &2.25 &	\bf 1065  &2.12 &	1276  &4.74 &	1269  &5.23 &	1211	  &3.69 \\\hline
apex3& 54&50 & 280& {\bf 337}	 &0.54 &		342	 &0.57 &	347	 &0.6	 &	356	 &0.82 &	386	 &0.94 \\\hline
apex4& 9&19 & 436& 	506	 &0.34 &	506	 &0.36 &	503	 &0.33 &	502	 &0.38 &		{\bf 500}	 &0.25 \\\hline
%b10     & 15&11 & 100 &   136  &0.32 & 130  &0.33 &   120  &0.39 &{\bf 115}	 &0.15 &	117	 &0.16 \\\hline
b2      & 16&17 & 106 &  131  &0.41 &  131  &0.44 &{\bf 120}  &0.62  &	126	 &0.36 &		125	 &0.36 \\\hline
%b3      & 32&20 & 211 &  315  &0.73 &   317  &0.77 &    321  &1.77 &	\bf 305	 &0.59 &	315	 &0.53 \\\hline
bc0     & 26&11 & 179 &   230  &0.60 &  218  &0.53 &  \bf  210  &1.18  &	218	 &0.44 &211	 &0.40 \\\hline
%bca&  26& 46& 180&191	 &0.27 &	191	 &0.28 &		189	 &0.31 &		189	 &0.23 	 &	{\bf 187}	 &0.22 \\\hline
%bcb& 26& 39& 155&{\bf 161}	 &0.19 &	{\bf 161}	 &0.22 &	{\bf 161}	 &0.29 &		{\bf 161}	 &0.20	&  	162	 &0.21 \\\hline
%bench1 & 9& 9& 139& 	{\bf 246}	 &0.25 &		248	 &0.24 &	250	 &0.34 &		247	 &0.19 &		253	 &0.16 \\\hline
chkn& 29& 7& 140&	598	 &3.67 &	448	 &3.55 &	\bf	216	 &2.86 &386	 &2.21 &		426	 &1.70 \\\hline
clip& 9&5 & 120& 154	 &0.32 &	153	 &0.29 &	\bf	143	 &0.38 &		157	 &0.2	0  &	159	 &0.17 \\\hline
cps&  24&109 &163 &	{\bf 184}	 &0.75 &		{\bf 184}	 &0.88 &		204	 &1.19 &		217	 &0.36 &	223	 &0.36 \\\hline
dist    &  8&5 & 123 &  138  &0.21 &   138   &0.23 &  \bf 133  &0.36  &	134	 &0.15 &	\bf	133	 &0.15 \\\hline
%ex1010  & 10&10 & 284 &    855  &0.72 & \bf 835  &0.79 &    893  &1.42  &	839	 &1.05 &		939	 &0.70 \\\hline
ex5     &  8&63 &  74 &   128  &0.38 & \bf 125  &0.44 &    142  &0.77  &	141	 &0.39 &		148	 &0.31 \\\hline
%exam     & 10&10 & 67 &  	151	 &0.27 &		149	 &0.28 &\bf 	147	 &0.39 &		149	 &0.21 &		167	 &0.26 \\\hline
%exep& 30&63 & 110& 	138	 &0.35 &		138	 &0.34 	&	{\bf 130}	 &0.71  &	133	 &0.28 &		131	 &0.23 \\\hline
%exps& 8&38 &136 & 	{\bf 149}	 &0.31 &		{\bf 149}	 &0.31 &		151	 &0.32 &	151	 &0.22 &	151	 &0.17 \\\hline
gary    & 15&11 & 107 &    139  &0.28 & 131  &0.27 &    132  &0.43  &	{\bf 124}	 &0.15 &	125	 &0.15 \\\hline
ibm     & 48&17 & 173 &    431  &0.69 & \bf 393   &0.64 &    416  &1.24  &	415	 &0.56 &		478	 &0.47 \\\hline
in4     & 32&20 & 212 &   329  &0.72 & 331  &0.81 &    321  &1.29  &\bf 303	 &0.52 &		319	 &0.48 \\\hline
intb    &  15&7 & 631 &   955  &1.79 & \bf 932  &1.83 &   1125  &3.65 &	1130  &4.99 	&	1173	  &3.84 \\\hline
jbp     & 36&57 & 122 &  151  &0.35 & 147   &0.36 &\bf  128  &0.33  &140	 &0.16 &		147	 &0.17 \\\hline
mainpla & 27&54 & 172 &    459  &2.86 & 405  &2.47 &     387  &3.33  &	366	 &2.76 &	\bf	338	 &2.23 \\\hline
max1024&10 &6 &274 &	334	 &0.30 &	330	 &0.36 &	{\bf 324}	 &0.50 &		339	 &0.34 &		338	 &0.29 \\\hline
misex3  & 14&14 & 690 & \bf  1132  &1.28 &1155   &1.75 &  1317  &4.58 &	1234	  &3.67 &		1464  &4.36 \\\hline
%pdc& 16& 40&145 &	390	 &0.53 &	\bf 386	 &0.60 &		389	 &0.94 &		387	 &0.46  &	481	 &0.37 \\\hline
soar    & 83&94 & 353 &    451  &1.16 &   449  &1.25 & {\bf 430}  &1.41   &	440	 &0.61  &	464	 &0.60 \\\hline
%spla    & 16&46 & 260 &    363  &0.43 &  363   &0.44 &\bf 361  &0.86  &	365	 &0.35 &		366	 &0.27 \\\hline
sym10& 10&1 &210 &\bf 233	 &0.42 &		234	 &0.48 &		248	 &1.37 &		239	 &1.19 &		258	 &1.38 \\\hline
table3  & 14&14 & 175 &   181  &0.21 & 181   &0.24 &  180  &0.24 &{\bf 179}	 &0.14 &		{\bf 179}	 &0.14 \\\hline
table5  & 17&15 & 158 &    167  &0.31 &   167  &0.32 &{\bf 161}  &0.28 &	{\bf 161}	 &0.20 &		{\bf 161}	 &0.17 \\\hline
%test2& 11& 35& 1103&		2275 	 &2.59 &	{\bf 2263}	  &1.87 &		2318	  &2.52 &		2299	  &2.12 &	2426	  &1.49 \\\hline
%test3   & 10&35 & 541 & 1365  &1.31 &\bf 1364  &1.18 &   1454  &1.84  &	1388	  &1.20  &	1513	  &0.78 \\\hline
tial    & 14&8 & 581 &   1121  &2.22 & \bf 1060  &2.13 &   1371  &6.68  &	1330	  &5.25 &		1322	  &3.65 \\\hline
vtx1	 & 27& 6&110 &\bf 236	 &0.50  &	247	 &0.51 &		258	 &0.93 &		313	 &0.63 &		317	 &0.62 \\\hline
x7dn	 &66 &15 & 538&	1078	  &1.89 &	1010	  &2.23 &\bf	919	 &2.80  &	1068	 &2.32 &		1043 	 &1.30 \\\hline
\end{tabular}  
 \end{scriptsize}  
  \caption{Comparison of five different variants of the DSOP minimization heuristic (SORT version: {\bf weight/dimension}.) The size of the best DSOP representation computed for each benchmark is in boldface.}
  \label{tab1wd}
  \end{table*}

\subsection{DSOP synthesis}

We have considered the five different variants of the heuristic described in Section~\ref{sec-alg}, denoted as {\bf DSOP-1, DSOP-2, DSOP-3, DSOP-4, DSOP-5}.
For each variant, we have run both versions of the procedure SORT, to estimate  the practical effectiveness of each version. Namely we have ordered the cubes for decreasing dimension and, in case of  equal dimension, for increasing weight (version {\bf dimension/weight}). Then we have ordered the cubes for increasing weight and, in case of  equal weight, for decreasing dimension (version {\bf weight/dimension}).

Since the benchmarks are multi-output functions and the algorithm is described for single output function, in the experiments we have considered each output separately, but the minimization phase  with {\sc espresso} is performed in a multi-output way. Moreover, common disjoint cubes of several output are counted only once.

Tables~\ref{tab1dw} and~\ref{tab1wd} report a significant subset of the experiments. In particular,  Table~\ref{tab1dw} reports the performances of the heuristics with respect to the first version of the SORT procedure, while Table~\ref{tab1wd} is relative to the second SORT procedure. 
%%anna
All benchmarks in these tables are completely specified.
In both tables, the first column reports the name of the benchmark; the following two columns give the number of inputs and outputs; the column labeled {\bf SOP} shows the number of products in a SOP representation computed by {\sc espresso} in the heuristic mode; finally the remaining five pairs of columns report the number of disjoint products in the DSOP expressions computed by our heuristics and the corresponding synthesis time.

\begin{table}[ht]
  \centering
  \begin{footnotesize}
    \begin{tabular}
{||l||r|r||r|r||r|r||}
\hline
 &   &  & \multicolumn{ 2}{c||}{{\bf DSOP-3 (a)}} & \multicolumn{ 2}{c||}{{\bf DSOP-3 (b)}}\\
 {\bf Bench} &  {\bf in} &{\bf out} &   {\bf size}&  {\bf time}&   {\bf size}&   {\bf time}\\\hline
b10     & 15&11  & 115 &0.49 &115 & 17.04\\\hline
b3      & 32&20  &  279 &1.21 &  279 & 47.34\\\hline
bca&  26& 46& 189	&0.29 &189& 49.54  \\\hline
bcb& 26& 39& 162	&0.26 &		162 &42.33 \\\hline
bench1 & 9& 9&250	&0.32 &\bf 210 &14.92 \\\hline
ex1010  & 10&10 &  876 &1.34 & \bf 665 & 73.00\\\hline
exam     & 10&10 &	145	&0.33 &	\bf 107 & 62.26 	 \\\hline
exep& 30&63 & 130 &0.53 &	\bf	120 & 8.33\\\hline
exps& 8&38 &151&0.31 &		151 &37.91 \\\hline
pdc& 16& 40&381	&0.98 &\bf 277 & 37.14 \\\hline
spla    & 16&46 & 347 &0.64 & 347& 37.06  \\\hline
test2& 11& 35&	2322	 &2.40  &\bf 2054& 324.84\\\hline
test3   & 10&35 &1462 &1.81 &  \bf 1204 &159.60\\\hline
\end{tabular}  
 \end{footnotesize}  
  \caption{DSOP synthesis of incompletely specified benchmarks, without ({\bf DSOP-3 (a)}) and with ({\bf DSOP-3 (b)}) elimination of cubes covering only don't cares. The size of the best DSOP is in boldface.}
  \label{tabDC}
  \end{table}

\begin{table*}
  \begin{center}
  \begin{scriptsize}
   \setlength{\tabcolsep}{5pt} 
\begin{tabular}{||l||r|r||r|r|r|r|r|r|r|r||r||}
\hline
{\bf Bench} &   {\bf in} &  {\bf out} &  {\bf PLA} & {\bf SOP} &$\begin{array}{c}\mbox{\bf DSOP}\\ \mbox{\bf \sc espr.}\end{array}$&$\begin{array}{c}\mbox{\bf DSOP}\\ \mbox{\bf \cite{FSC93}}\end{array}$&$\begin{array}{c}\mbox{\bf DSOP}\\ \mbox{\bf \cite{ST02}}\end{array}$&$\begin{array}{c}\mbox{\bf DSOP}\\ \mbox{\bf \cite{FD02}}\end{array}$&$\begin{array}{c}\mbox{\bf DSOP}\\ \mbox{\bf \cite{DHFD04}}\end{array}$ &$\begin{array}{c}\mbox{\bf DSOP}\\ \mbox{\bf \cite{BE10}}\end{array}$&{\bf DSOP-3} \\\hline
5xp1    &  7 & 10 &   75 &  65 &   99 &  70 &   -- &   82 &79 &{\bf 48}& 70 \\\hline
9sym    &  9 &  1 &   87 &  86 &  209 & 166 &  148 &  148 & 148&--&{\bf 134} \\\hline
alu4    & 14 &  8 & 1028 & 575 & 3551 &  -- &   -- & 1545 & 1372&1206&{\bf 881} \\\hline
b12     & 15 &  9 &  431 &  43 &  691 &  57 &   -- &   60 & 60 &62&{\bf 51} \\\hline
clip    &  9 &  5 &  167 & 120 &  359 & 162 &   -- &  262 &212 &167&\bf 140 \\\hline
co14    & 14 &  1 &   47 & 14 &\bf   14 &  -- &\bf   14 &\bf   14 &  --&--&\bf14 \\\hline
cordic & 23 & 2& 1206& 914&22228 &-- & --&  19763& 8311&\bf 6687&9893\\\hline
inc 		&7	 &9  &      34 & 30&  56  &-- &    --&  66&\bf27&--&37\\\hline
max1024 & 10 &  6 & 1024 & 274 &  775 &  -- &   -- &  444&--&362 &\bf 334 \\\hline
misex1  &  8 &  7 &   32 &  12 &   18 &  \bf 15 &   -- &   34 &34&\bf 15&\bf  15 \\\hline
misex2  & 25 & 18 &   29 &  28 &   29 & \bf 28 &   -- &   30 & 29 &\bf 28&\bf28 \\\hline
misex3  & 14 & 14 &  1848 &  690 &  2349   & --   &   -- &    2255 &1973&--& \bf  1032 \\\hline
mlp4    &  8 &  8 &  256 & 128 &  206 &  -- &   -- &  203 &--&155& \bf143 \\\hline
rd53    &  5 &  3 &   32 &  31 & \bf  31 & \bf 31 &   -- &   35 & 35&\bf31& \bf31 \\\hline
rd73    &  7 &  3 &  141 & 127 & \bf 127 &\bf 127 &   -- &  147&147& \bf127&\bf 127 \\\hline
rd84    &  8 &  4 &  256 & 255 & \bf 255 &  -- &   -- &  294 &294&\bf255&\bf 255 \\\hline
sao2   & 10 &  4 &  58 & 58 &  199 &  -- &  -- &  96 &96&--& \bf24 \\\hline
sym10   & 10 &  1 &  837 & 210 &  367 &  -- &  240 &  240&--& --& \bf232 \\\hline
t481    & 16 &  1 &  481 & 481 & 2139 &  -- & 2139 & 1009 &\bf841&--&\bf 841 \\\hline
x7dn    & 66 & 15 &  622 & 538 & 1697 &  -- &   -- & 1091 &--&1228& \bf812 \\\hline
xor5    &  5 &  1 &   16 &  16 &   \bf 16 &  -- &   \bf  16 &   \bf 16 &\bf 16&\bf 16&\bf 16 \\\hline
\end{tabular}
\end{scriptsize}  
 \end{center}
  \caption{Comparison with other techniques. The size of the best DSOP is in boldface.}
  \label{tab2}
\end{table*}

As Table~\ref{tab1dw} and Table~\ref{tab1wd} clearly show, the third variant of the heuristic, together with the first version of procedure SORT (version {\bf dimension/weight}),  gives the best results regarding the size of the resulting DSOP forms, and its running times are comparable to those of the other variants, and sometimes even lower.

%%anna
We have then tested the performances of the best variant of our heuristic on incompletely specified benchmarks. Table~\ref{tabDC} reports a subset of our experiments. We have run the heuristic without the elimination of cubes covering only don't cares points from the solution under construction ({\bf DSOP-3 (a)}), and with such elimination ({\bf DSOP-3 (b)}). As the table clearly shows,  the elimination of these cubes naturally produces  better solutions in terms of size,  but the computational time is much higher. 

In another series of experiments we compared   our heuristic  (with the third version of the optimization phase, and without elimination of cubes of don't cares only) with other DSOP minimization methods. 
%%anna
%%aggiornato aggiungendo nuovi riferimenti bibliografici e risultati
We considered three techniques working, as ours, on explicit representation of cubes, and one method based on binary decision diagrams.
The first algorithm~\cite{FSC93} sorts cubes in a minimal SOP according to their size, and compares the largest cube with all the others, starting from the smallest ones. In the next step, the second largest cube is selected and compared to all smaller ones, etc. As a last step, the cubes are merged wherever possible.
The second algorithm, presented in~\cite{ST02}, exploits the property of the most binate variable in a set of cubes to compute a DSOP form.
The algorithm proposed in~\cite{BE10} enumerates all overlapping pairs of cubes in a SOP form, and builds a disjoint cover starting from the pairs of cubes with the highest degree of logic sharing.

Finally, the third approach, presented in~\cite{FD02}, makes use of BDDs, exploiting the efficiency resulting from the implicit representation of the products. Observe in fact that a DSOP form can be extracted in a straightforward way from a BDD, as different one-paths correspond to disjoint cubes. As the results presented in~\cite{FD02} largely depend on the variable ordering of the underlying BDD, in~\cite{DHFD04} an evolutionary algorithm has been proposed to  find an optimized variable ordering for the BDD representation that guarantees more compact DSOP forms.

Table~\ref{tab2}  reports a cost-oriented comparison among the different methods. The first three columns are as before. Columns four and five report the number of products in the PLA realization and in the SOP form heuristically minimized by {\sc espresso} in the heuristic mode. The column labeled 
{\bf DSOP {\sc espr.}} shows the size of the DSOP computed running {\sc espresso} with the option ``-Ddisjoint'' on the previously computed SOP form. The next five columns report the sizes, when available, of the DSOP forms computed with the methods discussed in~\cite{FSC93},~\cite{ST02},~\cite{FD02},~\cite{DHFD04}, and~\cite{BE10}, respectively. Finally, the last column shows the size of the DSOPs computed with our heuristic (third variant).

As the table clearly shows, our method almost always generates smaller DSOP representations, and the gain in size can be quite striking, as for instance for the benchmarks {\em alu4, clip} and {\em misex3}. We have found only a few benchmarks where our approach compares unfavorably: {\em 5xp1, cordic} and {\em inc}.

A time comparisons among all these different methods was not possible due to the partial absence of CPU times specification in the literature.

\subsection{Partial DSOP synthesis}
In order to test our partial DSOP synthesis algorithm, we have applied the heuristic to the classical {\sc espresso} benchmark suite~\cite{Y91} with the following meaning. We have considered only benchmarks with don't cares, where the on set of the benchmark is the on set of $sopD$, and the don't care set of the benchmark is the don't care set of $sopS$. 
%The don't care set of $sopD$ and the on set of $sopS$ are empty.

%%anna
Table~\ref{tabPDSOP} reports a subset of our experimental results. %The first column reports the name of the benchmark; the following two columns the number of inputs and outputs; 
The column labeled {\bf SOP} shows the number of products in a SOP representation computed by {\sc espresso} in the heuristic mode. The remaining three pairs of columns report the number of  products  and the corresponding synthesis time for the following three forms (all computed with the third version of the optimization phase, the dimension/weight sort version, and with the elimination of cubes covering don't cares only):
\begin{enumerate}
\item {\bf DSOP}: a DSOP for the original function, with the choice of don't cares performed by  {\sc espresso} in the heuristic mode. Each don't care point is covered {\bf at most} once.
\item {\bf P-DSOP (a)}: a partial DSOP for the original function, with the choice of don't cares performed by {\sc espresso} in the heuristic mode. Don't care points are either eliminated or covered {\bf at least} once.
\item {\bf P-DSOP (b)}: a partial DSOP for the original function, where all the don't cares of the function are in play (they have all been covered during the first SOP minimization).
Don't care points are either eliminated or covered {\bf at least} once  in the final form.
\end{enumerate}
Note that the results in the column {\bf SOP} are better than ours because the resulting form is not disjoint.

The table suggests that the best solution is the one relative to the choice of don't cares made by  {\sc espresso}.
Moreover, it appears clearly from these results that the option of covering more than once the don't care points of the function ({\bf DSOP-3 (a)}) gives better results, especially for big benchmarks.

\begin{table*}[ht!]
  \centering
  \begin{footnotesize}
    \begin{tabular}
{||l||r|r||r||r|r||r|r||r|r||}
\hline
  &   &  & {\bf SOP} &\multicolumn{ 2}{c||}{{\bf DSOP}} & \multicolumn{ 2}{c||}{{\bf P-DSOP (a)}} & \multicolumn{ 2}{c||}{{\bf P-DSOP (b)}} \\
 {\bf Bench} &  {\bf in} &{\bf out}  &{\bf size} &{\bf size}&{\bf time}&{\bf size} &{\bf time}&{\bf size} &{\bf time} \\ \hline
%alu2&10&8&68&\bf 71& 6.67&73 &13.38&74& 14.51\\\hline
alu3&10&8&66& 67 & 4.65&	67& 10.83& 67 & 13.06\\\hline
apla&10&12&25&33 & 4.11&\bf	25 &2.49&\bf 25& 13.204\\\hline
b10	&15&11&100&\bf 115 & 17.04&\bf 115&15.96&117 & 18.22\\\hline
%b11&8&31&27& 29 &3.94&29 &2.56&29 &3.06\\\hline
b3&32&20&211& 279 &47.34  &279& 51.39 &279 & 54.80\\\hline
b4&33&23&54&	 62&10.06& 62& 5.99& 62& 7.74\\\hline
%b7	&8&31&27&29 & 5.26&29& 2.54&29& 2.93\\\hline
bca&26& 46& 180&\bf	 189& 49.54&\bf  189& 33.63& 190& 53.11\\\hline
bcb	&26&39&155&162 &42.33&162 & 29.08&162 &42.18\\\hline
bcc	&26&45&137& 145& 26.08&145 & 31.19&145 & 43.24\\\hline
%bcd.div3&4&4&9&10 & 0.61&\bf 9&0.75&\bf 9 & 0.95\\\hline
bcd	&26&38&117&121 & 17.30&121& 20.51&121& 29.14\\\hline
%bench&6&8&18&	26 & 1.62&\bf 19& 1.98&33 &9.42\\\hline
bench1&9&9&139&	210 & 14.92&\bf 164&18.30&246 & 86.90\\\hline
%dekoder&4&7&9&	17& 1.10&15 & 2.08&\bf 14& 3.33\\\hline
dk17	&10&11&18&22& 1.32&\bf 19&1.50&\bf 19&15.57\\\hline
dk27	&9&9&10&12 & 1.04&\bf 10&0.73&\bf 10&10.15\\\hline
dk48	&15&17&22&24& 1.17&\bf 22&1.40&\bf 22&69.59\\\hline
duke2&22&29&12&	26& 3.06&24&4.10&\bf 23&26.16\\\hline
ex1010	&10&10&284&665&73.00&\bf 481&87.24&739&282.44\\\hline
exam	&10&10&67&107	&62.26&\bf 89&66.78&168&115.78\\\hline
%exep	&30&63&110&	\bf 120&8.33&122&12.45&122&79.68\\\hline
exp	&8&18&59&72&7.11&70&7.28&\bf 63&16.25\\\hline
exps	&8&38&136&\bf 151&	37.91&\bf 151&41.37&152&5.61\\\hline
%fout	&6&10&45&73	& 5.57&\bf 59&8.00&69&14.76\\\hline
inc& 7& 9& 30&\bf 37&2.52 &38&3.298&41&4.87\\\hline
%l8err&8&8&52&	58&	4.00&58&6.69&\bf 56&5.06\\\hline
mark1&20&31&19&	29& 2.71&\bf 23&5.26&25&136.14\\\hline
p1&8&18&55&	90&	7.51&\bf 67&12.26&79&32.37\\\hline
p3&8&14&39&	71&	4.77&\bf 47&10.29&52&18.27\\\hline
pdc&16&40&145&	277&	 37.14&203&68.65&\bf 191&291.51\\\hline
%rd53	&5&3&31&11	&0.45&9&1.15&\bf 8&1.71\\\hline
%rd73	&7&3&127&17&	0.74&15&2.48&\bf 12&3.28\\\hline
sao2	&10&4&9&24&	1.09&20&4.01&\bf 19&6.86\\\hline
spla	&16&46&260&347&37.06&347&50.05&347&53.21\\\hline
t2&17&16&53&	 \bf 58&3.34&59&5.36&59&8.48\\\hline
t4	&12&8&16&18	&1.31&\bf 17&1.79&21&10.49\\\hline
test1&8&10&121&	169	&10.52&\bf 141&12.92&197&45.38\\\hline
test2&11&35&1103&	2054	&324.84&\bf 1354&361.73&1996&2921.75\\\hline
test3&10&35&541&	1204	&159.60&\bf 746&179.11&1001&1405.99\\\hline
test4&8&30&120&	466	&65.34&335&82.44&\bf 278&253.03\\\hline
%wim	&4&7&9&16	&1.14&\bf 12&1.88&15&2.46\\\hline
x1dn	&27&6&70&148	&12.51&\bf 98&26.31&106&48.30\\\hline
\end{tabular}  
 \end{footnotesize}  
  \caption{Partial DSOP synthesis, using the subset of don't cares selected by {\sc espresso-non-exact} ({\bf P-DSOP (a)}) or all don't cares of the original function ({\bf P-DSOP (b)}).}
  \label{tabPDSOP}
  \end{table*}

\section{Conclusions and Future Work}
\label{sec-fut}
%### RIFATTO QUI SOTTO 
%Deriving an optimal DSOP representation of a Boolean function is a hard problem. We have then described a heuristic  that has been implemented and tested.
%### FINE RIFATTO

Deriving an optimal DSOP or partial DSOP
representation of a Boolean function is a hard problem. This is why we have proposed a heuristic  that has been implemented, tested, and compared with others.

From the experimental results we conclude that exploiting SOP minimization for DSOP synthesis is a crucial idea. In fact, comparing our results with the ones in the literature we always obtain equal or smaller forms.
We observe that the fact that SOP and DSOP problems are so close is not intuitive. In fact, we would have expected that efficient strategies to solve the two problems would be different since DSOP minimization appears to be much harder then SOP synthesis. Nevertheless, the experiments show that, starting from minimal or quasi-minimal SOP expressions, we can heuristically derive very compact DSOP forms. Moreover, from Table~\ref{tabPDSOP} we also infer that the choice of the don't cares, which are used as ones of the function, performed for the SOP minimization is nearly always the best choice also for DSOP synthesis.
Therefore it would be interesting to further study the closeness of SOP and DSOP minimal forms both in theoretical and experimental way.

%%anna: ho commentato l'idea di usare le BDD
%Cubes are currently  explicitly represented; it would be interesting to use implicit data structures (e.g., BDDs) and symbolically perform all the operations in the heuristic.
%%%anna
%In this way, we should speed up the computation, and perform operations like the elimination of cubes of don't cares in a more efficient way.
%

%### RIFATTO QUI SOTTO 
%From a more theoretic perspective, it could  be an interesting development to study more deeply the approximability properties of DSOP minimization, with the aim of designing  approximation algorithms, instead of heuristics.  Recall that a $p$-approximation algorithm for a minimization problem always yields solutions whose cost $C$ is $\leq p C^{\ast}$, where $C^{\ast}$ is the cost of an optimal solution~\cite{GJ79}. Thus, both heuristics and approximation algorithms do not guarantee the minimality of their solutions, but an approximation algorithm guarantees near-optimum solutions, whereas we cannot perform any prediction on the result of a heuristic. As a first step in this direction we should understand when our heuristic returns a DSOP whose cost is much higher then the cost of an optimal DSOP.
%### FINE RIFATTO 

It could also be worth studying the approximability of DSOP minimization with the aim of designing  approximation algorithms instead of heuristics.  In fact, while a $p$-approximation algorithm yields a near-optimal solution, i.e. a solution whose cost $C$ is $\leq p C^{\ast}$ where $C^{\ast}$ is the cost of an optimal solution~\cite{GJ79}, no prediction can be made on the result of a heuristic. Perhaps a first step in this direction would be understanding when our heuristic returns a DSOP whose cost is much higher then the cost of an optimal DSOP.

%\begin{acknowledgements}
%If you'd like to thank anyone, place your comments here
%and remove the percent signs.
%\end{acknowledgements}

% BibTeX users please use one of
%\bibliographystyle{spbasic}      % basic style, author-year citations
%\bibliographystyle{spmpsci}      % mathematics and physical sciences
%\bibliographystyle{spphys}       % APS-like style for physics
%\bibliography{}   % name your BibTeX data base

\begin{thebibliography}{10} 

\bibitem{AHMPS08}
E.~Allender, L.~Hellerstein, P.~McCabe, T.~ Pitassi, and M.E.~Saks.
\newblock Minimizing Disjunctive Normal Form Formulas and AC$^{\mbox{0}}$ Circuits Given a Truth Table.
\newblock {\em SIAM J. Computing}  38 (1) (2008), pp.~63-84.

\bibitem{BE10}P.~Balasubramanian and D.A.~Edwards. 
\newblock Self-Timed Realization of Combinational Logic. 
\newblock {\em Proc. 19th International Workshop on Logic and Synthesis (IWLS)} (2010). 


%\bibitem{BCLP08a}A.~Bernasconi, V.~Ciriani, F.~Luccio, and L.~Pagli.
%\newblock A New Heuristic for DSOP Minimization.
%\newblock  {\em Proc. 8th International Workshop on Boolean Problems (IWSBP)} (2008).


\bibitem{DHFD04}N.~Drechsler, M.~Hilgemeier, G.~Fey, and R.~Drechsler.
\newblock Disjoint Sum of Product Minimization by Evolutionary Algorithms.
\newblock {\em Proc. EvoWorkshops} (2004), pp.~198-207.

\bibitem{FC99}B.J.~Falkovsky and C.H.~Chang.
\newblock Paired Haar Spectra Computation Through Operations on Disjoint Cubes.
\newblock  {\em IEEE Proc. on Circuits, Devices, and Systems} (1999), pp.~117-123.

\bibitem{FSC93}B.J.~Falkovsky, I.~Sch\"{a}fer, and C.H.~Chang.
\newblock An Effective Computer Algorithm for the Calculation of Disjoint Cube Representation of Boolean Functions.
\newblock  {\em Proc. IEEE International Midwest Symp. on Circuits and Systems} (1993), pp.~1308-1311.

\bibitem{FD02}G.~Fey and R.~Drechsler.
\newblock Utilizing BDDs for Disjoint SOP Minimization.
\newblock  {\em Proc. IEEE International Midwest Symp. on Circuits and Systems} 2 (2002), pp.~306-309.

\bibitem{GJ79}M.R.~Garey and D.S.~Johnson.
\newblock {C}omputer and {I}ntractability: {A} {G}uide to the {T}heory of {NP}-completeness.
\newblock  {\em W.H. Freeman and Company} (1979).


\bibitem{HCO74}S.J.~Hong, R.G.~Cain and D.L.~Ostapko.
\newblock MINI: A Heuristic  Approach  for  Logic  Minimization.
\newblock {\em IBM Journal of Research and Development} 18 (5) (1974), pp.~443-458.

\bibitem{K72}R.M.~Karp.
\newblock Reducibility Among Combinatorial Problems.
\newblock  {\em R. E. Miller and J. W. Thatcher (editors): Complexity of Computer Computations}. New York: Plenum Press (1972), pp.~85Ð103. 



\bibitem{LF09} I.~Lemberski and P.~Fi\u{s}er.
\newblock Multi-Level Implementation of Asynchronous Logic Using Two-Level Nodes.
\newblock {\em Proc. 4th IFAC Workshop on Discrete-Event System Design} (2009), pp.~213-218


\bibitem{LF09b} I.~Lemberski and P.~Fi\u{s}er.
\newblock Asynchronous Two-Level Logic of Reduced Cost.
\newblock {\em Proc. of 12th IEEE Symposium on Design and Diagnostics of Electronic Systems} (2009), pp.~68-73



\bibitem{Sa93}T.~Sasao.
\newblock EXMIN2: A Simplification Algorithm for Exclusive-OR-Sum-of -Products Expression for Multiple-Valued-Input Two-Valued-Output functions.
\newblock  {\em IEEE Trans. on Computer Aided Design} 12 (1993), pp.~621-632.

\bibitem{Sa95}T.~Sasao.
\newblock A Design Method for AND-OR-EXOR Three-Level Networks. 
\newblock  {\em Proc. ACM/IEEE International Workshop on Logic Synthesis} (1995), pp.~8:11-8:20.


\bibitem{Sa99}T.~Sasao.
\newblock {Switching Theory for Logic Synthesis}.
\newblock  {\em Kluwer Academic Publishers} (1999).              


\bibitem{ST02}L.~Shivakumaraiah and M.~Thornton.
\newblock Computation of Disjoint Cube Representation Using a Maximal Binate Variable Heuristic.
\newblock  {\em Proc. Southeastern Symp. on System Theory} (2002), pp.~417-421.

\bibitem{TDM01}M.A.~Thornton, R.~Drechsler, and D.M.~Miller.
\newblock {\em Spectral Techniques in VLSI CAD}.
\newblock  Kluwer Academic Publ., 2001.

\bibitem{UVS06}
C.~Umans, T.~Villa, and A.L.~Sangiovanni-Vincentelli.
\newblock Complexity of Two-Level Logic Minimization.
\newblock {\em IEEE Trans. on Computer-Aided Design} 25 (7) (2006), pp.~1230-1246. 

\bibitem{Y91}
S.~Yang.
\newblock {\em Synthesis on Optimization Benchmarks}.
\newblock User guide, Microelectronic Center, 1991.
\newblock Benchmarks at:
ftp://ftp.sunsite.org.uk/computing/general/espresso.tar.Z.


\end{thebibliography}

% Non-BibTeX users please use

\end{document}